\documentclass[11pt,a4paper]{article}
\usepackage{eurosym}
\usepackage{amsfonts}
\usepackage{fullpage}
\usepackage[authoryear,round]{natbib}
\usepackage{amssymb,amsmath,stmaryrd,amsthm}
\usepackage{accents}
\usepackage{enumerate}
\usepackage{nicefrac}
\usepackage[mathscr]{eucal}
\usepackage{epsfig}
\usepackage{pstricks,pst-node,pst-text,pst-3d}
\usepackage{pifont}
\usepackage{textcomp}
\usepackage{multicol}
\usepackage{pst-node}
\usepackage{graphicx}
\usepackage{pstricks-add}
\usepackage{color}
\usepackage{tikz}
\usepackage{soul}
\usepackage{listings}
\usepackage[english]{babel}
\usepackage{booktabs}
\usepackage[utf8]{inputenc}
\usepackage{setspace}
\usepackage{tabularx}
\usepackage{float}
\usepackage{hyperref}
\usepackage{enumitem}

\hypersetup{hidelinks}
\theoremstyle{definition}

\newtheorem{lemma}{Lemma}
\newtheorem{remark}{Remark}

\newtheorem{example}{Example}

\newtheorem{proposition}{Proposition}

\newtheorem{corollary}{Corollary}

\def\1{\mathbf{1}}

\def\cC{\mathcal{C}}

\def\cS{\mathcal{S}}

\def \0{{\mathbf{0}}}

\def \mymid {\,:\,}
\renewcommand{\&}{\textup{\symbol{`\&}}}
\newlist{steps}{enumerate}
{1}
\setlist[steps, 1]{label = Step \arabic*:}
\makeatletter
\renewcommand{\@biblabel}[1]{#1.} 
\begin{document}

\title{Formation of international environmental agreements and payoff
allocation}
\author{Michel Grabisch${}^{(1)(2)}$\thanks{Corresponding author. Email {\tt
      michel.grabisch@univ-paris1.fr}} \and Elena Parilina${}^{(3)}$ \and Agnieszka
Rusinowska${}^{(2)}$ \and Georges Zaccour${}^{(4)}$ \\
\\
{\normalsize (1) Charles University, Department of Applied Mathematics,
Prague, Czech Republic} \\
{\normalsize (2) Universit\'{e} Paris 1 Panth\'{e}on-Sorbonne, CNRS, Centre
d'\'{E}conomie de la Sorbonne,} \\
{\normalsize Paris School of Economics, Paris, France}\\
{\normalsize (3) Saint Petersburg State University, Saint Petersburg, Russia}%
\\
{\normalsize (4) Chair in Game Theory and Management, GERAD, HEC Montreal,
Canada}}
\date{}
\maketitle

\begin{abstract}
Dealing with climate change requires that all countries engage in costly
efforts to reduce their emissions. Reaching this objective has so far been
elusive because it is in the best interest of each country to let the others
do the effort and benefit itself from a better environment. The presence of
negative externalities and strategic behavior have made game theory a
natural paradigm to design an international environmental agreement (IEA)
that codifies what countries should do. Considering that countries are
sovereign and no supranational entity can impose an agreement, a stream of
literature adopted a noncooperative mode of play to the formation of an
environmental coalition. On the other hand, as joint optimization of all
countries' payoff leads to the best outcome, cooperative games approach has
also been proposed to share the cost of climate change. Both approaches have
their pros and cons.

In this paper, we propose a model of coalition formation that combines both
cooperative and noncooperative modes of play. Starting from any given
coalition, we implement a Markov process that shows sequentially which
countries join or leave the coalition until reaching an absorbing state. All
possible sequential scenarios are considered and an allocation to the player
is made taking into account individual rationality. An illustration with
vulnerable and invulnerable countries to pollution is given.
\end{abstract}

\noindent \textbf{JEL Classification}: C71, C72, F53, Q53

\vspace{2 mm}

\noindent \textbf{Keywords}: Coalition Formation; International
Environmental Agreement; Markov Process; Shapley Value.

\section{Introduction}

In this paper, we propose a Markov process approach to the formation of an
international environmental agreement (IEA) that combines both cooperative
and noncooperative games aspects. IEAs are concerned by transboundary (or
global) pollution problems where the emissions generated by any country
affect the well-being of all. The global environment is a public good that
suffers from the lack of sufficient efforts by the agents (polluting
countries) to keep it clean. For each country to internalize the cost of its
emissions, it must either be incentivized through carrots, e.g., transfers
or subsidies, or through sticks, e.g., taxes and quotas. The absence of a
supranational environmental agency makes the implementation of such
instruments nearly impossible in practice in the context of global
pollution. The thinking is that an IEA could avoid, or at least alleviate,
the impact of this tragedy of the common type of situation.

From the above, it is clear that any design of an IEA must at least account
for the facts that the countries' payoffs are interdependent (by the very
public good nature of the environment), and that each country behaves
strategically by choosing its pollution level considering the others'
reactions. These features have made game theory a natural paradigm to model
IEAs, which explains the large number of publications in this area during
the last three decades or so. The literature can be classified along two
main dimensions: (i) the mode of play, i.e., cooperative or noncooperative,
and (ii) the temporal nature of the process of coalition formation, i.e.,
static or dynamic. We review below the part of the literature that is
directly relevant to our approach, and next highlight our contribution.

\subsection{Review of the literature}

\textbf{Noncooperative game approach to IEAs.} The premise in this
literature stream is straightforward: In the absence of an international
institution that can force sovereign countries to join an IEA, the
participation must be voluntary based, and the agreement self-enforcing
(stable in an equilibrium sense). The objective function of a country is
typically made of two pieces: a concave increasing benefit function that
depends only on own emissions, which are a by-product of production, and a
convex increasing damage cost function that depends on total emissions (here
lies the negative externality). The endogenous size of a stable agreement is
determined by two rules referred to as internal and external stability. An
IEA is internally stable if no member can improve its payoff by leaving it,
and externally stable if no non-member can achieve a better outcome by
joining the agreement. These rules were originally proposed in \cite%
{Aspremont et al.} to study the stability of collusive pricing by a cartel.
The payoffs at the basis of the comparisons involved in the no-entry and
no-exit conditions are computed assuming that the members (signatories) act
as a single player and choose the emissions levels that jointly maximize
their welfare, while each non-member (non-signatory) optimizes its
individual payoff.

The seminal papers in this stream are \cite{Barrett94} and \cite%
{CarraroSiniscalco93}, where the authors showed that, without additional
incentives, only a few (typically 2 or 3) countries will join an IEA. \cite%
{DiamandSartz06} showed that, independently of the number of countries, the
IEA will include at most 4 countries when the benefit and cost functions are
quadratic, an assumption often made in the literature. The reason behind
this grim result is free riding. Indeed, each country is better off letting
the others invest in costly pollution abatement and enjoys a better
environment without paying itself the cost. In these seminal papers, and in
most of the literature that followed, the countries are symmetric.
Consequently, it does not really matter which country joins and which does
not, the main point being how many of the $n$ countries will end up signing
the agreement. \cite{Pavlova} assumed asymmetries in both the benefit and
cost functions and obtain that large stable coalitions will materialize only
if they include countries that have relatively high marginal benefits and a
relatively low marginal environmental damage, which is uninteresting as
these countries do not contribute much to the common good. That is, we have
a larger IEA but not much less emissions. \cite{Ansink et al.} proposed a
four-stage model of coalition formation with a noncooperative setting and
used the concept of subgame perfect Nash equilibrium to find the IEA. The
feature of the model is that the players can differently participate in the
agreement by choosing of being a member, supporter, and a free-rider. Like
in the standard IEA game, members coordinate their abatement levels and all
other countries (supporters and free-riders) act as singletons. The authors
show that under mild conditions on the costs and benefits of contributing to
the public good (e.g., abatement of greenhouse gas emissions), there exist
equilibria with support.

In an attempt to enlarge the size of the agreement, scholars extended the
basic model in various directions, such as Stackelberg leadership (e.g., 
\cite{DiamandSartz06}, \cite{RubioUlph06}), transfer schemes (\cite%
{HoelSchneider97}, \cite{FuentesRubio10}), reputation effects (e.g., \cite%
{HoelSchneider97}, \cite{JeppesenAndersen98}, \cite{CabonRamani06}, \cite%
{SaccoZaccour19}), issue linkages (e.g., \cite{Barrett97b}, \cite%
{BotteonCarraro98}, \cite{Katsoulacos97}, \cite%
{CarraroSiniscalco97,CarraroSiniscalco98}, \cite{LeBretonSoubeyran97}, \cite%
{Mohr}), punishments (\cite{HoelSchneider97}, \cite{Bahn et al.}, \cite%
{key-5}), imposing a ratification threshold (see, e.g., \cite%
{CourtoisHaeringer05}, \cite{RubioCasino05}, \cite%
{CarraroMarchioriOreffice09}, \cite{WeikardWanglerFreytag15}), seeking a
regional rather than global agreements (\cite{Asheim et al.}), considering
modest emission-reduction targets (\cite{FinusMaus08}); see also \cite%
{Barrett03} for a general discussion.

Finally, we note that a series of contributions considered investments in
both emissions abatement and adaptation to climate change (e.g., \cite%
{Lazkano et al.}, \cite{BS2017,BS2019ERE,BS2019}, \cite{maza18}, \cite%
{RohrerRubio24}). For additional references on noncooperative IEA games, we
refer the interested reader to the surveys in \cite{Finus01,Finus08}, \cite%
{Wagner01}, \cite{Mitchell}, \cite{Jorgensen2010433}, \cite%
{BenchekrounLong2012}, \cite{Carraro2013}, \cite{MarrouchChaudhuri16}.

\medskip

\textbf{Process of coalition formation.} The IEA game is (typically) played
in two stages. In the first stage, a coalition of $s$ countries forms, while
the remaining $n-s$ countries stay out. In the second stage, a
noncooperative emissions game is played to determine the equilibrium
emissions and payoffs. Recall that the coalition acts as a single player and
each of the $n-s$ non-member players optimizes individually its payoff. One
checks if this coalition satisfies the internal and external stability
conditions. If it is the case, then we have a stable IEA. Next, set $s=s+1$
and redo the two steps.

The process just described assumes that the players are myopic. \cite%
{DiamandSartz06} endow countries with foresightedness. If a country
contemplates leaving or joining a coalition, then it will determine its
benefit considering its action could incentivize other countries to defect
from the coalition or to join it. The implication is that the resulting
coalition is not necessarily of size $s-1$ or $s+1$ but could be of any
size. The authors derive the conditions for the existence of a unique set of
farsighted stable IEAs.

In the above cited papers, the emissions game is static. A series of
contributions retained a dynamic game model to capture the fact that the
environmental damage is caused by the accumulation of pollution, modeled as
a state variable, and by current one. The membership game could be played
once or at each stage. \cite{key-5} proposed an evolutionary game model,
with a state equation governing the evolution of the pollution stock, and a
replicator dynamics equation describing the evolution of the membership.
Interestingly, the authors obtain that, depending on the parameter values, a
variety of stable coalitions can materialize, including the grand coalition.

\cite{RubioUlph07} retained a dynamic game model in discrete time where in
each stage, the players solve both an emissions game and a membership game
so that the size of the coalition varies over time with the pollution stock.
At the unique steady state, we have both the steady-state value of the
pollution stock and the stable IEA. Note that the authors assume the
existence of an upper bound on non-signatory emissions and that signatories
are randomly selected in each stage. \cite{RubioUlph07} extends \cite%
{RubioCasino05} 
where the membership game is played once. \cite{Zeeuw08} extends the static
farsightedness behavior process in \cite{DiamandSartz06} to a dynamic game
including a pollution stock. In \cite{RubioUlph07} and \cite{Zeeuw08}, both
the membership game and the emissions game are dynamic.

A series of papers imbed the two-stage participation game into a repeated
game, see, e.g., \cite{Barrett03}, \cite{Bagga}. In the later, see also \cite%
{Schmidt2017}, signatories can commit to the number of periods during which
the agreement is binding. \cite{Karp} consider a dynamic model where
countries cannot make long-term commitments, but can communicate with each
others to form beliefs about the consequences of re-opening negotiations.
Interestingly, the authors obtain that an effective agreement can emerge.

\medskip

\textbf{Cooperative game approach to IEAs.} \cite{PloegZeeuw92} and \cite%
{DocknerLong93} were among the firsts to characterize and contrast
cooperative and noncooperative solutions to a dynamic (differential) game
model of pollution control. They showed that cooperation leads to a higher
collective welfare than the sum of individual welfares, and total
cooperative emissions are lower than their noncooperative counterparts.
These results are an invitation to design an IEA that includes all
countries. This is the starting point of the cooperative game literature on
IEAs. \cite{chtu92,chtu95,chtu97} introduced the concept of $\gamma$-core,
which was later applied to IEAs in, e.g., \cite{Germain et al.}. In the $%
\gamma$-characteristic function concept, the payoff of a coalition is given
by the Nash equilibrium outcome in the noncooperative game played by the
coalition (acting as a single player) and the non-members acting
individually. Interesting to note that this is also how the emissions game
is played in the noncooperative approach. \cite{EyckmansTulkens03} provide
some simulations for assessing the coalitionally stable sharing of the cost
of climate change. \cite{BretonFredjZaccour06} discuss different definitions
of characteristic functions and their impact on deterring free riding.

A cooperative outcome is collectively optimal, but is not an equilibrium,
and therefore not self-supported. A country can deviate from the grand
coalition and achieve a higher payoff than the one it obtains by remaining
in the coalition. To make a cooperative solution an equilibrium, one can use
trigger strategies that punish harshly and credibly any deviation from the
cooperative trajectory. This established result in repeated and state-space
games of infinite duration might not be easy to apply in the international
arena. Chander and Tulkens did not rely on this argument to sustain their $%
\gamma $-core IEA, but on the fact that any deviation would lead to the
grand coalition to fall apart, which makes the deviation uninteresting,
i.e., leads to a lower payoff.\footnote{%
In the $\gamma$ characteristic function concept, the value of a coalition is
its Nash equilibrium outcome in the noncooperative game between the
coalition, acting as one player, and the nonmembers acting individually.}
Whereas here we have either an IEA including all players or no coalition at
all, partial coalitions are feasible in the noncooperative approach.

\cite{PetrosjanZaccour03} propose a time-consistent Shapley value to share
the cost of an IEA among the $n$ countries. As in the $\gamma$-core
approach, the sustainability of the grand coalition is supported by the fact
that a deviation along the cooperative state trajectory at any date before
reaching the planning horizon will lead all players to switch to their
noncooperative emissions in the subgame starting at that date. As the
Shapley value is an imputation, it is not rational for any player to deviate.

To wrap up, the (typically) small coalition outcome of a noncooperative game
approach to the formation of an IEA is entirely driven by the internal and
external stability concepts. The IEA inclusive of all countries obtained in
a cooperative game approach is driven by the view that any deviation will
destroy completely the agreement. \cite{DiamandSartz06} argue that
farsightedness constitutes a vision between the two. Indeed, when a country
considers deviating, it should not assume optimistically (myopically) that
the coalition will remain intact, nor pessimistically (they call it
phobically) that the coalition will completely collapse.

\subsection{Contribution of our paper}

A first originality of our approach is that it combines both cooperative and
noncooperative games thinking to design an IEA, and not either of the two
modes of play as it is done in the literature. Indeed, we use both the
internal and external stability conditions and we define a sharing mechanism
of coalition collective outcome among its members.

\begin{remark}
In two-stage IEA formation games, the members cooperate in setting their
emissions strategies, but the game is still played noncooperatively, that
is, the final outcome is an equilibrium and not an allocation in a
cooperative game meaning. The same applies to the three-stage game in \cite%
{Barrett 2005} where the signatories choose collectively their abatement
targets in the second stage and non-signatories individually determine their
abatements in the third stage. Although coordination gives a cooperative
game flavor to the process, there is no use of a cooperative game solution
that is applied either to the endogenously determined stable coalition, or
to the grand coalition. Here, we do use a cooperative game solution.
\end{remark}

The second original feature is that our coalition formation process is
sequential and visible, that is, at each step, we determine endogenously
which country has an interest in joining or leaving the coalition formed at
the preceding step. As we will see, this defines a Markov process of
coalition formation, and any absorbing state of the Markov chain is a stable
IEA. We show that absorbing classes must be absorbing states or periodic
classes.

The third original feature concerns the sharing of benefits among
participants in the IEA, once an absorbing state is reached. All classical
cooperative game solutions, e.g., the Shapley value, the core, and the
nucleolus, assume that the grand coalition will form, which is far from
being the case in the context of an IEA. Instead of using such solutions, we
propose here to implement the Markovian process value introduced in \cite%
{fagr12,fagr13}, which perfectly fits our situation and, interestingly, is a
generalization of the Shapley value. Indeed, supposing that the process of
formation of a coalition obeys a Markov chain, \cite{fagr12} proposed the
notion of scenario-value, which gives the marginal contribution of every
player along the scenario. The process-value is the expected value of the
scenario-values, taken over all possible scenarios\footnote{%
A scenario is any sequence of coalitions generated by the Markov chain
starting from some fixed coalition, e.g., the empty coalition, and reaching
an absorbing state.} with the probability of their occurrence. 
Therefore, it does not matter whether the final coalition (the IEA) is the
grand coalition or not, and all possibilities of reaching the final
coalition are taken into account in the computation. We derive some general
properties of the Markovian process value applied to the formation of IEAs
and show its flexibility through a detailed example. We note that the
Markovian process value is axiomatized, which provides a normative
foundation to the method of sharing benefits we are proposing.

Further, our approach integrates the idea that the players are farsighted,
which we implement by considering only those scenarios leading to a
scenario-value that is an imputation, i.e., any country knows that entering
this scenario will lead to a payoff at least equal to what it can get by
acting alone.

\begin{remark}
In our framework, farsightedness is a natural request as it is synonymous of
individual rationality, which is a corner stone aspect in both cooperative
and noncooperative game theory.
\end{remark}

Finally, in the example of formation of an IEA, 
we consider the following research question: What is the impact on coalition
formation if a country must pay a cost when deciding to leave an agreement?
This cost can be justified by the harm caused to the other signatories,
e.g., they need to support the operating costs of the agreement with less
members.\footnote{%
We recall that Canada pulled out of the Kyoto Protocol after signing it and
the US exited the Paris Agreement, so leaving an agreement after signing it
has an empirical support.} 
Here, one contribution of our results is that introducing such a cost can
lead to larger stable coalition. An alternative interpretation of this cost
would be the price to pay to a country in the IEA because staying in
incentivizes others to join.

The rest of the paper is organized as follows. In Section \ref{sec:mark}, we
recall the Markovian process value, and introduce in Section \ref{sec:fram}
the environmental game model. We discuss the IEA formation process in
Section \ref{sec:IEA formation} and provide a detailed example in Section %
\ref{sec:example}. We briefly conclude in Section \ref{sec:conclusions}.

\section{The Markovian process value\protect\footnote{%
This section draws heavily on \cite{fagr12,fagr13}.}}

\label{sec:mark}

In this section, we introduce a value for TU-games suitable for Markovian
coalition formation processes, which encompasses the classical Shapley
value. We refer the reader to \cite{fagr12,fagr13} for a full exposition.

We consider a random process of coalition formation among players from set $%
N $, supposing that the process is Markovian and homogeneous, i.e., its
transition matrix does not depend on time. Specifically, the set of states
is the set of all possible coalitions $2^N$, and for any $S,T\in 2^N$, the
transition from $S$ to $T$, denoted by $S\rightarrow T$, occurs with
probability $p_{S,T}$ and does not depend on the history. We denote by $%
P=[p_{S,T}]$ the transition matrix, which is row-stochastic.

Fixing a finite horizon $\tau$, a \textit{scenario} $\mathcal{S}$ of length $%
t\leqslant \tau$ is a sequence of coalitions $S_0S_1\cdots S_t$ with $S_i\in
2^N$, $i=0,\ldots,t$, which can be produced by the Markov chain, i.e., $%
p_{S_i,S_{i+1}}>0$ for every $i=0,\ldots,t-1$. The probability of its
occurrence is simply 
\begin{equation}  \label{prob occurrence}
\mathbb{P}(\mathcal{S}) = \mathbb{P}(S_0)\prod_{i=0}^{t-1}p_{S_i,S_{i+1}},
\end{equation}
where $\mathbb{P}(S_0)$ is the probability of starting from coalition $S_0$.
Note that there may be repetitions of coalitions in the scenario, $S_t$ is
not necessarily the grand coalition $N$, and $S_0$ is not necessarily the
empty set. Also, we take the convention that if $S$ is a terminal state of
the Markov chain (i.e., $p_{S,S}=1$), then any scenario containing $S$
terminates at $S$, deleting the repetitions $SS\cdots$.

Denote by $S\Delta T:=(S\cup T)\setminus (S\cap T)$ the symmetric difference
of $S$ and $T$. We call \textit{simple scenario} any scenario $\mathcal{S}%
=S_{0}S_{1}\cdots S_{t}$ such that $|S_{i}\Delta S_{i+1}|\leqslant 1$ for $%
i=0,\ldots ,t$, i.e., in a transition at most one player is \textit{active},
which means either entering the coalition $S_{i+1}$ or leaving the coalition 
$S_{i}$. Accordingly, we call \textit{simple} a coalition formation process
producing only simple scenarios.

Let $\mathcal{G}(N)$ be the set of TU-games on $N$. Given a simple Markov
coalition process with transition matrix $P$, denote by $\mathfrak{S}(P,\tau
,S_{0})$ the set of all (simple) scenarios produced by the Markov process of
length at most $\tau $ and starting at $S_{0}$. For any simple scenario $%
\mathcal{S}=S_{0}S_{1}\cdots S_{t}\in \mathfrak{S}(P,\tau ,S_{0})$, we
define the \textit{scenario-value} $\phi ^{\mathcal{S}}:\mathcal{G}%
(N)\rightarrow {\mathbb{R}}^{N}$ by 
\begin{equation}
\phi _{i}^{\mathcal{S}}(v)=\sum_{k\mymid S_{k}\Delta S_{k+1}\ni
i}(v(S_{k+1})-v(S_{k})),\quad i\in N.  \label{eq:scenval}
\end{equation}%
(For the general case with nonsimple scenarios, see \cite{fagr12,fagr13}.)
Then, the \textit{process-value} $\phi :\mathcal{G}(N)\rightarrow {\mathbb{R}%
}^{N}$ for $\tau ,S_{0}$ is defined by 
\begin{equation}
\phi _{i}(v)=\sum_{\mathcal{S}\in \mathfrak{S}(P,\tau ,S_{0})}\mathbb{P}(%
\mathcal{S})\phi _{i}^{\mathcal{S}}(v),\quad i\in N.  \label{eq:procval}
\end{equation}%
Note that since all scenarios are beginning with $S_{0}$, we assume $\mathbb{%
P}(S_{0})=1$. Also, $S_{0}$ must be chosen such that $p_{S_{0},T}>0$ at
least for one coalition $T\neq \varnothing $ such that $|S_{0}\Delta T|=1$.
A natural choice is to choose $S_{0}$ such that $p_{T,S_{0}}=0$ for all $%
T\neq \varnothing $, i.e., $S_{0}$ is a \textquotedblleft
source\textquotedblright\ in graph-theoretic terms. One may also take
several starting points $S_{0}$ with some probability and consider the
expected value.

We show now that the classical Shapley value is a particular case of the
process-value. Consider the Markov process defined by the transition matrix 
\begin{equation*}
p_{S,T}=%
\begin{cases}
\displaystyle\frac{1}{n-|S|}, & \text{if }S\subseteq T,|T\setminus S|=1, \\ 
1, & \text{if }S=T=N, \\ 
0, & \text{otherwise}.%
\end{cases}%
\end{equation*}%
It is easy to see that there are $n!$ scenarios of length $n$ produced by
this process starting from $\varnothing $ and ending at $N$, with equal
probability, which corresponds to the maximal chains in the Boolean lattice $%
(2^{N},\subseteq)$. Therefore, the process-value coincides with the Shapley
value \citep{fagr12}.

\section{The environmental game model}

\label{sec:fram}

Consider a set $N=\{1,\ldots ,n\}$ of countries (players) contemplating the
formation of an international environmental agreement (IEA). Each country $i$
produces a quantity of goods. Assuming a monotone relationship between
production and the level of carbon emission $e_{i}$, we can express the
revenue of country $i$ as a function of $e_{i}$. Let $e=(e_{1},\ldots
,e_{n}) $. Following 
\cite{PloegZeeuw92}, we assume that the payoff function of country $i$ has
the form 
\begin{equation}
\pi _{i}(e)=B_{i}(e_{i})-D_{i}\Big( \sum_{j\in N}e_{j}\Big),  \label{eq:1}
\end{equation}%
where $B_{i}$ is the benefit function from economic activity induced by $%
e_{i}$, and $D_{i}$ is the damage function. We assume that $B_{i}(e_{i})$ is
concave increasing and $D_{i}\left( \cdot \right) $ is convex increasing.
Each country $i$ aims at maximizing its payoff $\pi _{i}$ by choosing the
level of emissions $e_{i}$. As the damage function depends on the total
emissions, each player's payoff depends on other players' decisions.

Suppose that the formation of an IEA can be modeled as a cooperative game.
Consequently, we need first to determine the optimal collective outcome that
is achieved by maximizing the sum of all countries' payoffs, and second to
allocate this outcome to the players by using some solution concept of
cooperative games. Instead of using one of the classical solution concepts,
e.g., the Shapley value, the core, or the nucleolus, we propose to use a
solution concept taking into account the process itself of coalition
formation, which is the Markovian process value, described in Section~\ref%
{sec:mark}.

For this second step, we must first compute the characteristic function (CF)
values of all possible coalitions. Recall that the CF $v\left( \cdot \right) 
$ is defined by 
\begin{equation*}
v: 2^N \rightarrow \mathbb{R},\text{ \ \ \ \ }v\left( \varnothing \right) =0.
\end{equation*}
We adopt here the $\gamma $-CF due to \cite{chtu97}. In this setup, the $%
\gamma$-CF, denoted $v^{\gamma}\left( S\right)$, where $S\subseteq N$ is a
coalition, is defined as the partial (Nash) equilibrium outcome of the
noncooperative game between the members of coalition $S$, acting as a single
player, and non-members acting individually. Formally, for any nonempty
coalition $S\subseteq N$, we have 
\begin{align}
v^{\gamma }(S)& :=\sum_{i\in S}\pi _{i}(e_{S}^{\gamma },\{e_{j}^{\gamma
}\}_{j\in N\backslash S})  \label{eq:gamma} \\
e_{S}^{\gamma }& :=\arg \max_{e_{S}}\sum_{i\in S}\pi
_{i}(e_{S},\{e_{j}^{\gamma }\}_{j\in N\backslash S})  \notag \\
e_{j}^{\gamma }& :=\arg \max_{e_{j}}\pi _{j}(e_{S}^{\gamma
},e_{j},\{e_{i}^{\gamma }\}_{i\in N\backslash \{S\cup j\}}),~\text{for all }%
j\in N\backslash S,  \notag
\end{align}%
where $e_{S}=\{e_{i}\}_{i\in S}$, that is, the vector of emissions of the
players belonging to coalition $S$. For $S=\varnothing $, as this means that
the coalition formation process has not yet started, we set by convention $%
\pi _{i}(e_{\varnothing }^{\gamma },\{e_{j}^{\gamma }\}_{j\in N})=0$ for all 
$i\in N$. This implies $v(\varnothing )=0$, as expected.

From now on, we drop the superscript $\gamma $ from $v$, and for simplicity
write $e^{S}:=(e_{S}^{\gamma },\{e_{j}^{\gamma }\}_{j\in N\backslash S})$.

\section{The IEA formation process}

\label{sec:IEA formation}

\subsection{The transition mechanism}

\label{subsec:transition}

Consider that a coalition $S$ of countries has formed. Any country $i$ then
faces either of the two questions: If $i\in S$, then is it payoff-improving
to leave it? If $i\not\in S$, then is it in its best interest to join $S$?
To answer these questions, we consider the following quantities: 
\begin{align}
\delta_{i}^{-}(S)& =\pi_{i}(e^{S})-\pi_{i}(e^{S\setminus \{i\}}) +c,\quad
i\in S, S\subseteq N,|S|>1  \label{eq:d-} \\
\delta_{i}^{-}(\{i\}) &=\pi_{i}(e^{\{i\}})  \label{eq:0} \\
\delta_{i}^{+}(S)& =\pi_{i}(e^{S\cup \{i\}})-\pi_{i}(e^{S}),\quad i\not\in
S, S\subseteq N  \label{eq:d+}
\end{align}
where $c\geqslant 0$ represents a cost of leaving the current  coalition.
Observe that due to our convention, $\delta _{i}^{+}(\varnothing )=\pi
_{i}(e^{\{i\}})$, $i\in N$.

Player $i\in N\setminus S$ is better off joining$\frac{{}}{{}}$ $S$  if $%
\delta _{i}^{+}(S)>0$: indeed, the payoff of $i$ is greater if $i$ joins  $S$
than if staying outside $S$. On the other hand, player $i\in S$ has an
incentive to leave  $S$ if $\delta _{i}^{-}(S)<0$, which means that the gain
of payoff if leaving  the coalition is greater that the cost incurred by
leaving. This cost can be  interpreted either as a money amount or as a
damage to the reputation of  country $i$, due to leaving the current
international environmental agreement  for selfish reasons. Note that in the
literature, usually no such cost is  considered ($c=0$). We highlight that
there is no cost when a player leaves the coalition of size 1 (see equation %
\eqref{eq:0}), where he is the unique player. The reason we assume this is
that there is no reputation damage or other expenses when such a coalition
is destructed.

We call \textit{state} the current coalition. The quantities in (\ref{eq:d-}%
)-(\ref{eq:d+}) govern the process of coalition formation, starting from the
empty coalition, as follows. For every state $S$:

\begin{itemize}
\item If for some $i\in S$, $\delta_i^-(S)<0$, then $i$ leaves $S$ with
probability proportional to $|\delta_i^-(S)|$: 
\begin{equation}  \label{eq:p1}
p_{S,S\setminus\{i\}} = \frac{|\delta_i^-(S)|}{\sum_{\substack{ j\in
N\setminus S  \\ \delta^+_j(S)>0}}\delta_j^+(S) + \sum_{\substack{ j\in S 
\\ \delta^-_j(S)<0}}|\delta_j^-(S)|}.
\end{equation}

\item If for some $i\not\in S$, $\delta_i^+(S)>0$, then $i$ enters $S$ with
probability proportional to $\delta_i^+(S)$: 
\begin{equation}  \label{eq:p2}
p_{S,S\cup \{i\}} = \frac{\delta_i^+(S)}{\sum_{\substack{ j\in N\setminus S 
\\ \delta^+_j(S)>0}}\delta_j^+(S) + \sum_{\substack{ j\in S  \\ %
\delta^-_j(S)<0}}|\delta_j^-(S)|}.
\end{equation}

\item Otherwise, i.e., if for all $i\in S$, $\delta _{i}^{-}(S)\geqslant 0$,
and for all $i\not\in S$, $\delta _{i}^{+}(S)\leqslant 0$, the process
remains in state $S$: 
\begin{equation}
p_{S,S}=1.  \label{eq:p3}
\end{equation}
\end{itemize}

Equations (\ref{eq:p1})-(\ref{eq:p3}) define a Markov chain on the set of
states $2^{N}$, with transition probabilities $p_{S,T}$ from a state $S$ to
a state $T$, where $T$ is either $S\cup \{i\}$ or $S\setminus \{i\}$ for
some country $i$. 

Knowing the transition matrix $P=[p_{S,T}]$ permits to deduce all possible
scenarios of coalition formation that can happen, and to compute the
probability of their occurrence. Observe that by construction all scenarios
are simple, and therefore, the scenario-values and the process-value can be
computed according to (\ref{eq:scenval}) and (\ref{eq:procval}).

We list hereunder some properties of the transition graph $G$ obtained from
the transition matrix $P$.

\begin{lemma}
\label{lem:tg} The following holds:

\begin{enumerate}
\item Except absorbing states, no state can have a loop, i.e., a transition
to itself.

\item Given two states $S$ and $T$, there is a transition from $S$ to $T$ or
from $T$ to $S$ only if $|S\Delta T|=1$. In addition, if there is a
transition from $S$ to $T$, there cannot be a transition from $T$ to $S$. In
other words, there is no cycle of length 2 in $G$.

\item There is no cycle of odd length, except loops (length = 1).

\item Cycles of length 4 which are absorbing classes exist, for any number
of countries (see example below with 3 countries).

\psset{unit=0.8cm} \pspicture(-0.2,-1)(6.8,5) \cnodeput(0,2){0}{$\varnothing 
$} \cnodeput(2,4){1}{$1$} \cnodeput(2,2){2}{$2$} \cnodeput(2,0){3}{$3$} %
\cnodeput(4,4){12}{$12$} \cnodeput(4,2){13}{$13$} \cnodeput(4,0){23}{$23$} %
\cnodeput(6,2){123}{$123$} \ncline[linecolor=red]{->}{0}{1} %
\ncline[linecolor=red]{->}{2}{0} \ncline{->}{3}{0} %
\ncline[linecolor=red]{->}{1}{12} \ncline[linecolor=red]{->}{12}{2} %
\ncline{->}{13}{1} \ncline{->}{23}{2} \ncline{->}{13}{3} \ncline{->}{23}{3} %
\ncline{->}{123}{12} \ncline{->}{13}{123} \ncline{->}{23}{123} 
\endpspicture
\end{enumerate}
\end{lemma}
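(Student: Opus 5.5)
The plan is to read everything off the definitions \eqref{eq:d-}--\eqref{eq:p3}. The key fact, which I would isolate first, is that a transition $S\to T$ with $T\neq S$ is only ever produced by \eqref{eq:p1} or \eqref{eq:p2}. Hence $T=S\cup\{i\}$ with $\delta_i^+(S)>0$, or $T=S\setminus\{i\}$ with $\delta_i^-(S)<0$.

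\emph{Item 1.} A loop $p_{S,S}>0$ is created only by \eqref{eq:p3}, and then $p_{S,S}=1$, so $S$ is absorbing. When $S$ is not absorbing, the probabilities \eqref{eq:p1}--\eqref{eq:p2} put all their mass on states different from $S$.

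\emph{Item 2.} The first claim, $|S\Delta T|=1$, is immediate from the key fact. For the antisymmetry, suppose without loss of generality that $T=S\cup\{i\}$ and that $S\to T$, so $\pi_i(e^{S\cup\{i\}})-\pi_i(e^S)>0$.
\begin{itemize}
\item If $|T|>1$, a reverse transition $T\to S$ would require $\delta_i^-(T)=\pi_i(e^{T})-\pi_i(e^{S})+c<0$. Since $c\geqslant 0$, this contradicts the previous inequality.
\item If $|T|=1$, i.e.\ $S=\varnothing$, we have $\delta_i^+(\varnothing)=\pi_i(e^{\{i\}})>0$. A reverse transition would need $\delta_i^-(\{i\})=\pi_i(e^{\{i\}})<0$, again a contradiction.
\end{itemize}
So no 2-cycle exists.

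\emph{Item 3.} By item 2 every non-loop edge joins coalitions whose cardinalities differ by exactly one. Hence $G$ without loops is a subgraph of the hypercube on $2^N$, which is bipartite under the parity of $|S|$. Along any closed walk the parity of $|S|$ flips at each step and must return to its starting value. Therefore every cycle other than a loop has even length.

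\emph{Item 4.} This is an existence statement, and I expect it to be the main obstacle, since the signs of the $\delta$'s are not free: they come from the values $\pi_i(e^S)$ of the underlying game. The approach is to prescribe the sign pattern realizing the 4-cycle $\varnothing\to\{1\}\to\{1,2\}\to\{2\}\to\varnothing$ in the figure and check its consistency.

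The required conditions on the cycle are:
\begin{itemize}
\item $\pi_1(e^{\{1\}})>0$, and player 2 does not join $\varnothing$;
\item $\pi_2(e^{\{1,2\}})>\pi_2(e^{\{1\}})$, and player 1 does not leave $\{1\}$;
\item $\pi_1(e^{\{1,2\}})-\pi_1(e^{\{2\}})+c<0$, and player 2 does not leave $\{1,2\}$;
\item $\pi_2(e^{\{2\}})<0$, and player 1 does not join $\{2\}$.
\end{itemize}
In addition, every other country $j\geqslant 3$ must have $\delta_j^+\leqslant 0$ at the four cycle states; this keeps the cycle closed and hence an absorbing class for any $n$.

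Some conditions are mutually compatible by the sign analysis of item 2. For instance, $\pi_1(e^{\{1\}})>0$ gives $\delta_1^-(\{1\})>0$, so player 1 indeed stays in $\{1\}$. Others, such as player 1 leaving $\{1,2\}$ while player 2 does not, are independent inequalities and must be checked.

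I would first verify the pattern with the figure's three-country data, treating the numbers $\pi_i(e^S)$ as parameters. I would then try to exhibit them inside the model \eqref{eq:1}, for instance with quadratic $B_i$ and linear $D_i$, with player 1 weakly vulnerable and player 2 vulnerable. Added countries $j\geqslant 3$ would be given benefit and damage functions making their joining gains negative at the four cycle states. If an explicit parametric realization proves hard, the fallback is to present a numerical instance, as the figure for three countries suggests the authors do.
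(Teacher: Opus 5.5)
Items 1--3 are correct and follow the paper's proof. Loops arise only from \eqref{eq:p3}. A non-loop move changes exactly one player, and a 2-cycle is impossible because $\delta_i^-(S\cup\{i\})=\delta_i^+(S)+c$ for $S\neq\varnothing$ and $\delta_i^-(\{i\})=\delta_i^+(\varnothing)$. Odd cycles are excluded by the parity of $|S|$. Your treatment of $c$ and of the case $S=\varnothing$ in item 2 is more explicit than the paper's one-line justification.

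For item 4 you use the same cycle $\varnothing\to\{1\}\to\{1,2\}\to\{2\}\to\varnothing$ and the same closure requirement as the paper. The paper's proof stays at the level of the graph and exhibits no payoffs: it only notes that the pattern is compatible with the earlier items and that all other edges at the four states must point inward. Your write-up, however, ends with a plan (``I would try\ldots'', ``the fallback is\ldots''), so as it stands the existence claim is not proved. Three corrections to the plan:
\begin{enumerate}
\item The figure carries no payoff data to check against.
\item The $\pi_i(e^S)$ are not free parameters. All singletons induce the same fully noncooperative profile, so $e^{\{1\}}=e^{\{2\}}=:e^{nc}$.
\item Each ``and \ldots'' clause in your four bullets is the reverse of a neighbouring cycle edge, hence automatic by item 2. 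For instance, player 2 staying in $\{1,2\}$ is not an independent condition.
\end{enumerate}
What must actually be realized is
\[
\pi_1(e^{nc})>0>\pi_2(e^{nc}),\quad \pi_2(e^{\{1,2\}})>\pi_2(e^{nc}),\quad \pi_1(e^{\{1,2\}})<\pi_1(e^{nc})-c,
\]
together with $\delta_j^+\leqslant 0$ at the four states for every $j\geqslant 3$.

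Your suggested route does close the gap. Take $B_i(e)=\alpha_ie-\tfrac12e^2$ and $D_i(E)=d_iE$. Strategies are then dominant: $e_i=\alpha_i-d_S$ for $i\in S$, where $d_S=\sum_{k\in S}d_k$, and $e_i=\alpha_i-d_i$ for $i\notin S$. A short computation gives, for $S\neq\varnothing$ and $j\notin S$,
\[
\delta_j^+(S)=|S|\,d_j^2-\tfrac12 d_S^2,\qquad \delta_j^+(\varnothing)=\pi_j(e^{nc})=\tfrac12(\alpha_j^2-d_j^2)-d_jE^{nc},
\]
and $\delta_i^-(S)=\delta_i^+(S\setminus\{i\})+c$ when $|S|>1$.

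Now choose $d_1=1$, $d_2=2$, $d_j=\tfrac12$ for $j\geqslant 3$, $0\leqslant c<1$, $\alpha_2=\alpha_j=4$, and $\alpha_1$ large enough depending on $n$. Then:
\begin{enumerate}
\item $\delta_2^+(\{1\})=\tfrac72$ and $\delta_1^-(\{1,2\})=c-1<0$;
\item $\delta_j^+(\{1\})=-\tfrac14$, $\delta_j^+(\{2\})=-\tfrac74$ and $\delta_j^+(\{1,2\})=-4$;
\item $\pi_1(e^{nc})>0$, while $\pi_2(e^{nc})<0$ and $\pi_j(e^{nc})<0$.
\end{enumerate}
Each of the four states therefore has a unique outgoing transition, and the cycle is an absorbing class for every $n\geqslant 2$. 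All emissions in the relevant states lie in $[0,\alpha_i)$.

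Making country 1 only weakly vulnerable, as you propose, is essential once $n\geqslant 3$. With $d_1=0$ one gets $\delta_j^+(\{1\})=d_j^2$, so a vulnerable $j$ joins $\{1\}$, while an invulnerable $j$ with $\alpha_j>0$ joins $\varnothing$. In the paper's two-type vulnerable/invulnerable model this cycle can only be realized with country 1 invulnerable and country 2 vulnerable. The same obstruction then rules out $n\geqslant 3$, so damage parameters beyond that specification are genuinely needed.
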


\begin{proof}
  \begin{enumerate}
  \item  This is by construction of the transition matrix (formulas
    (\ref{eq:p1}) to (\ref{eq:p3})).
  \item  The first assertion comes from the fact that the process
    considers that only one player can enter or leave at a time.
    The second assertion is true because $S\Delta T$ is a single
  country $i$, and a transition from $S$ to $T$ means that for country $i$, its
  payoff is better under the state $S$ than in the state~$T$.
\item Suppose a cycle of odd length exists, say $S_1\rightarrow
  S_2\rightarrow\cdots\rightarrow S_{2k}=S_1$, with $k>0$. Denote by $s$ the
  cardinality of $S_1$ and suppose it is an odd number. As all transitions
  involve exactly one country, the cardinality of $S_2$ is even, and therefore
  the cardinality of $S_3$ is odd again. It follows that eventually the
  cardinality of $S_{2k}$ is even, which is impossible. The other case works the
  same. 
\item Consider the cycle $\varnothing\rightarrow 1\rightarrow 12 \rightarrow 2
  \rightarrow \varnothing$. It obeys the properties given in item 1, and hence is a
  possible transition graph. Now, to make it an absorbing class, it suffices
  that for 1, all transitions with $1i$, $i\neq 2$, are in the direction
  $1i\rightarrow 1$, for 12, all transitions with $12i$, $i\neq 1,2$, are in the
  direction $12i\rightarrow 12$, and similarly for 2 and $\varnothing$.
  \end{enumerate}
\end{proof}


We recall that an absorbing class is \textit{aperiodic} if the greatest
common divisor of the length of all cycles in the class is 1; otherwise, the
class is said to be \textit{periodic}, with the period being the greatest
common divisor. The following corollary is a direct consequence of the above
Lemma, and describes which types of absorbing class we can have.

\begin{corollary}
\label{cor:abcl} Consider the Markov process induced by some decisive payoff
functions $\pi _{1},\ldots ,\pi _{n}$ as described. Then, the absorbing
classes are either singletons (absorbing states) or periodic classes.
\end{corollary}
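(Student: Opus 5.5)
The plan is a case split on the size of the absorbing class, with each case closed by an item of Lemma~\ref{lem:tg}. Let $C$ be an absorbing (closed, communicating) class of the Markov chain defined by (\ref{eq:p1})--(\ref{eq:p3}).

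If $C$ consists of a single state $S$, then, because $C$ is closed, every transition out of $S$ must go back to $S$. Hence $p_{S,S}=1$ and $S$ is an absorbing state. This is the first alternative of the statement.

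Suppose now that $|C|\geqslant 2$. The first step is to show that no state of $C$ has a loop. A loop at $S$ only occurs through (\ref{eq:p3}), which forces $p_{S,S}=1$. In that case $S$ could not reach any other state of $C$, contradicting communication within the class. So, by item~1 of Lemma~\ref{lem:tg}, every cycle inside $C$ has length at least $2$. By item~2 there are no cycles of length $2$, and by item~3 there are no cycles of odd length. Consequently every cycle in $C$ has even length at least $4$.

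The second step identifies the period. Every closed walk in $C$ decomposes into cycles, so the gcd of closed-walk lengths equals the gcd of cycle lengths. Since all cycle lengths are even, this gcd is at least $2$. Therefore $C$ is periodic, and the corollary follows. As a more direct alternative, one could use the parity argument from item~3 on cardinalities. Each transition changes $|S|$ by exactly one, so every closed walk has even length. This immediately gives period divisible by $2$.

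The main point needing care is the equivalence between ``gcd of cycle lengths'' and ``gcd of closed-walk lengths''. The parity argument handles it cleanly, since it applies to all closed walks at once. One should also note that a nontrivial class does contain at least one cycle, which holds by strong connectivity. The existence of such periodic classes is shown by item~4 of the lemma, although the statement does not require it.
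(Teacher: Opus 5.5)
Your proposal is correct and follows the paper's proof: you split on the size of the absorbing class, and for a non-singleton class you use the absence of odd cycles (Item 3 of Lemma~\ref{lem:tg}, i.e., the cardinality-parity argument) to conclude that the gcd is at least $2$. You are slightly more careful than the paper, which leaves implicit two points you make explicit: a loop forces $p_{S,S}=1$ and so cannot occur inside a non-singleton class, and the gcd of cycle lengths equals the gcd of closed-walk lengths.
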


\begin{proof}
Consider an absorbing class $\cC$. If $|\cC|=1$, then it reduces to an absorbing
state (examples below will show that this situation is possible). Suppose now
that $|\cC|>1$. As shown in Lemma~\ref{lem:tg} (Item 3), there is no cycle of odd
length, except loops. Therefore, the greatest common divisor of the length of
all cycles cannot be 1 and equals at least 2. Consequently, all classes not reduced
to a single state are periodic.
\end{proof}

\subsection{Myopic and farsighted behavior}

A country $i$ is \textit{myopic} if it decides to enter or leave a coalition
only considering (\ref{eq:d-}) and (\ref{eq:d+}). If all countries are
myopic, an assumption that has been largely adopted in the literature, the
process of coalition formation will exactly follow the Markov process
described above.

A country $i$ is said to be \textit{farsighted till horizon }$\tau $ if it
engages in a scenario $\mathcal{S}\in \mathfrak{S}(P,\tau ,\varnothing )$
only if after $\tau $ steps, the scenario-value for country $i$ satisfies $%
\phi _{i}^{\mathcal{S}}(v)\geqslant v(\{i\})$. Such a scenario is called 
\textit{feasible {for country }}$i$. {A scenario is \textit{feasible} if it
is feasible for every country. Observe that to be farsighted at horizon $%
\tau =1$ is equivalent to be myopic, because in our context the first
transition is always of the form $\varnothing \rightarrow \{i\}$. Then,
limiting the scenario to the first step yields $\phi _{i}^{\mathcal{S}%
}(v)=v(\{i\})\geqslant v(\{i\})$, which is always satisfied.}

\begin{remark}
Feasibility is similar to individual rationality in cooperative game theory
with a nuance. In a classical cooperative game, it is the sharing of the
grand coalition's payoff that must be individually rational, i.e., each
player gets at least $v(\{i\})$. Here, it is the sharing of the (not
necessarily grand) coalition's payoff that must be individually rational.
\end{remark}

\begin{example}
Let us give an illustrative example, considering $n=3$, and the transition
graph given below.

\psset{unit=0.8cm} \pspicture(-0.2,-1)(6.8,5) \cnodeput(0,2){0}{$\varnothing$%
} \cnodeput(2,4){1}{$1$} \cnodeput(2,2){2}{$2$} \cnodeput(4,4){12}{$12$} %
\cnodeput(4,2){13}{$13$} \cnodeput(4,0){23}{$23$} \cnodeput(6,2){123}{$123$} %
\ncline{->}{0}{1} \ncline{->}{0}{2} \ncline{->}{1}{12} \ncline{->}{1}{13} %
\ncline{->}{12}{123} \ncline{->}{13}{123} \ncline{->}{2}{12} %
\ncline{->}{2}{23} \ncline{->}{123}{23} \nccircle{->}{23}{-0.3cm} 
\endpspicture

There are four scenarios, which are: 
\begin{align*}
\mathcal{S}_1: &\quad \varnothing\rightarrow 1\rightarrow 12\rightarrow
123\rightarrow 23 \\
\mathcal{S}_2: &\quad \varnothing\rightarrow 1\rightarrow 13\rightarrow
123\rightarrow 23 \\
\mathcal{S}_3: &\quad \varnothing\rightarrow 2\rightarrow 12\rightarrow
123\rightarrow 23 \\
\mathcal{S}_4: &\quad \varnothing\rightarrow 2\rightarrow 23
\end{align*}
Assuming the horizon is $\tau=4$ and using formula \eqref{eq:scenval} to
calculate the scenario-values, the conditions for feasibility of the
scenarios are:

\begin{itemize}
\item Scenario $\mathcal{S}_1$: 
\begin{align*}
& \phi^{\mathcal{S}_1}_1 = v(1)+v(23)-v(123)\geqslant v(1)\Leftrightarrow
v(23)\geqslant v(123), \\
& \phi^{\mathcal{S}_1}_2 = v(12)-v(1)\geqslant v(2)\Leftrightarrow
v(12)\geqslant v(1)+v(2), \\
& \phi^{\mathcal{S}_1}_3 = v(123)-v(12)\geqslant v(3)\Leftrightarrow
v(123)\geqslant v(12)+v(3).
\end{align*}

\item Scenario $\mathcal{S}_2$: 
\begin{align*}
& \phi^{\mathcal{S}_2}_1 = v(1)+v(23)-v(123)\geqslant v(1)\Leftrightarrow
v(23)\geqslant v(123), \\
& \phi^{\mathcal{S}_2}_2 = v(123)-v(13)\geqslant v(2)\Leftrightarrow
v(123)\geqslant v(13)+v(2), \\
& \phi^{\mathcal{S}_2}_3 = v(13)-v(1)\geqslant v(3)\Leftrightarrow
v(13)\geqslant v(1)+v(3).
\end{align*}

\item Scenario $\mathcal{S}_3$: 
\begin{align*}
& \phi^{\mathcal{S}_3}_1 = v(12)-v(2)+v(23)-v(123)\geqslant v(1), \\
& \phi^{\mathcal{S}_3}_2 = v(2)\geqslant v(2) \text{ (always true)}, \\
& \phi^{\mathcal{S}_3}_3 = v(123)-v(12)\geqslant v(3)\Leftrightarrow
v(123)\geqslant v(12)+v(3).
\end{align*}

\item Scenario $\mathcal{S}_4$: 
\begin{align*}
& \phi^{\mathcal{S}_4}_1 =0 \text{ (country 1 is inactive)}, \\
& \phi^{\mathcal{S}_4}_2 = v(2)\geqslant v(2) \text{ (always true)}, \\
& \phi^{\mathcal{S}_4}_3 = v(23)-v(2)\geqslant v(3)\Leftrightarrow
v(23)\geqslant v(2)+v(3).
\end{align*}
\end{itemize}

Observe that if $v$ is superadditive, all scenarios are feasible for
countries 2 and 3. Feasibility for country 1 involves more constraints,
possibly compatible with superadditivity, unless $v$ is nonnegative. Indeed, 
$v(123)\geqslant v(1)+v(23)\geqslant v(23)$, making scenario $\mathcal{S}_1$
infeasible for country~1.

Now, if the horizon is limited to $\tau=2$, all conditions are either
trivially satisfied or of the type $v(ij)\geqslant v(i)+v(j)$.
\end{example}

The next proposition gives some conditions about feasibility.

\begin{proposition}
\label{prop:feas} Suppose that $v$ is superadditive. The following holds:

\begin{enumerate}
\item Suppose in addition that the Markov process has all absorbing classes
which are singletons, say $S_1,\ldots,S_k$. Take a country $i\in N\setminus
(S_1\cup\cdots\cup S_k)$ (i.e., not in an absorbing class) and suppose that $%
v(\{i\})>0$. Then any scenario beginning with $\varnothing\rightarrow \{i\}$
is infeasible for $i$ for some horizon $\tau$.

\item Consider a scenario terminating at $S$ and $i\in S$. If $i$ has never
left the scenario, then the scenario is feasible for $i$.
\end{enumerate}
\end{proposition}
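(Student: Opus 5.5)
For both items I would work directly from the definition \eqref{eq:scenval}. For a simple scenario, $\phi_i^{\mathcal S}(v)$ collects only the steps in which $i$ is active. Each such step is either an entry $T\rightarrow T\cup\{i\}$, contributing $v(T\cup\{i\})-v(T)$, or an exit $S\rightarrow S\setminus\{i\}$, contributing $v(S\setminus\{i\})-v(S)$. Superadditivity gives two bounds:
\[
v(T\cup\{i\})-v(T)\geqslant v(\{i\}) \quad\text{(entry)},\qquad v(S\setminus\{i\})-v(S)\leqslant -v(\{i\}) \quad\text{(exit)}.
\]
So every entry of $i$ gives it at least $v(\{i\})$ and every exit costs it at least $v(\{i\})$. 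Also, along a scenario the active steps of $i$ must alternate between entries and exits.

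For item 2, suppose the scenario terminates at $S\ni i$ and $i$ never leaves. Then $i$ is active in exactly one step, its unique entry $T\rightarrow T\cup\{i\}$. Hence
\[
\phi_i^{\mathcal S}(v)=v(T\cup\{i\})-v(T)\geqslant v(\{i\}),
\]
with equality when $T=\varnothing$, since $v(\varnothing)=0$. This is exactly feasibility for $i$, so item 2 follows immediately.

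For item 1, take a scenario starting $\varnothing\rightarrow\{i\}$, so $i$'s first contribution is $v(\{i\})$.
\begin{itemize}
\item Since all absorbing classes are singletons $S_1,\ldots,S_k$, the chain (finite state space $2^N$) is absorbed in one of them along any complete scenario.
\item As $i\notin S_1\cup\cdots\cup S_k$, $i$ must leave the coalition at some step. Let $k^*$ be the first such step, an exit $S_{k^*}\rightarrow S_{k^*}\setminus\{i\}$.
\item Choose the horizon $\tau=k^*+1$, i.e. stop just after this first departure. Before it, $i$'s only active step is the initial entry. Therefore
\[
\phi_i^{\mathcal S}(v)=v(\{i\})+v(S_{k^*}\setminus\{i\})-v(S_{k^*})\leqslant v(\{i\})-v(\{i\})=0<v(\{i\}),
\]
using $v(\{i\})>0$. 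So the scenario is infeasible for $i$ at this horizon.
\end{itemize}

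The main obstacle is the quantifier ``for some horizon $\tau$''. For a long horizon, $i$ might re-enter several times and collect entry gains that are not bounded above by superadditivity. I therefore cannot control the full telescoping sum, and rely instead on truncating right after the first exit. I would make explicit that the horizon depends on the scenario, via the first exit time of $i$. I would also justify that such an exit actually occurs along every scenario beginning with $\varnothing\rightarrow\{i\}$ (for long enough horizons), because the chain is absorbed only in states not containing $i$. If a uniform horizon is wanted, I would take the maximum of these exit times over the finitely many scenario prefixes up to the first exit. I would check that the transient states on the way can be traversed in bounded time, or else restrict to scenarios that are absorbed within $\tau$.
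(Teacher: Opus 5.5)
Your proof is correct and follows the paper's argument essentially step for step. For item 1 you truncate the scenario right after $i$'s first exit $S\rightarrow S\setminus\{i\}$ and use superadditivity to get $\phi_i^{\mathcal{S}}(v)=v(\{i\})+v(S\setminus\{i\})-v(S)\leqslant 0<v(\{i\})$, and for item 2 you use the single entry step $T\rightarrow T\cup\{i\}$, exactly as the paper does.

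One caution about your optional aside on a uniform horizon: taking the maximum of first-exit times would not work. In scenarios whose first exit comes earlier, $i$ may have re-entered by that later time, as the paper's subsequent remark notes. The scenario-dependent horizon is therefore the right reading of the statement.
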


\begin{proof}
  \begin{enumerate}
  \item 
  Consider such an $i\in N$. As $i$ belongs to none of the absorbing classes, it
must at some time $\tau$ go out the current coalition, say $S$:
\[
\cS = \varnothing\rightarrow \underbrace{\{i\}\rightarrow\cdots\rightarrow
  S}_{i\text{ present}}\rightarrow S\setminus\{i\}\rightarrow\cdots
\]
The scenario-value on horizon $\tau$ is then
\[
\phi^\cS_i(v) = v(\{i\})-0 + v(S\setminus\{i\})-v(S)\leqslant 0<v(\{i\})
\]
by superadditivity of $v$.
\item Let us consider
a scenario $\cS$ terminating at $S$, and consider $i\in S$ such that $i$ enters
at some point, say $T$, and never leaves. We obtain
\[
\phi^\cS_i=v(T\cup \{i\})-v(T)\geqslant v(\{i\})
\]
by superadditivity of $v$.
  \end{enumerate}
\end{proof}

\begin{remark}
\begin{enumerate}
\item The first result must be taken with care, as it is possible that $i$
enters again the scenario later: $T\rightarrow T\cup \{i\}$. Even if it has
to leave again, say from set $T^{\prime }$, the sign of the process-value is
not necessarily negative: 
\begin{equation*}
\phi^\mathcal{S}_i(v) = v(\{i\})+v(S\setminus\{i\})-v(S)+v(T\cup\{i\})-v(T)
+ v(T^{\prime }\setminus\{i\})-v(T^{\prime }).
\end{equation*}
Hence, everything depends on the horizon.

\item A similar remark holds for the second result. Consider $i\in S$,
entering and leaving at some points, for example: 
\begin{equation*}
\cdots\rightarrow T\rightarrow T\cup\{i\}\rightarrow \cdots\rightarrow
T^{\prime }\rightarrow T^{\prime }\setminus\{i\}\rightarrow\cdots\rightarrow
S\setminus \{i\}\rightarrow S.
\end{equation*}
Then 
\begin{equation*}
\phi^\mathcal{S}_i=\underbrace{v(T\cup\{i\})-v(T)}_{\geqslant v(\{i\})} + 
\underbrace{v(T^{\prime }\setminus\{i\})-v(T^{\prime })}_{\leqslant
-v(\{i\})}+\underbrace{v(S)-v(S\setminus\{i\})}_{\geqslant v(\{i\})}.
\end{equation*}
Nothing can be said in general.

\item The foregoing results give some insight in the following question:
When does the scenario-value yield an imputation? The question concerns only
the countries who are active in the given scenario, as one may consider that
the other countries $i$ should simply receive their individual payoff $\pi
_{i}(e^{S})$, where $S$ is the final state of the scenario. While there is
no general answer to this question, we can say that every country entering
the scenario and not leaving it will receive a payoff at least equal to its
individual payoff.\bigskip
\end{enumerate}
\end{remark}

\section{An example of formation of an IEA}

\label{sec:example}

\subsection{The setting}

\label{sec:ieavul}

We consider the set of $n$ countries $N=\{1,\ldots ,n\}$, where $k$
countries are vulnerable to pollution (these countries form the set $K$),
whereas the rest $n-k$ countries are not (set $\bar{K}$), where $K\cup \bar{K%
}=N$. One interpretation is that vulnerable players are developed countries
whose citizens are highly sensitive to environmental issues, whereas
invulnerable players are the developing countries having to deal with
economic and social pressing issues while the environment is not a priority.
Another interpretation is that invulnerable countries face a very small
environmental damage, which we set equal to zero to simplify the analysis.
The idea of dividing the set of countries into vulnerable and non-vulnerable
players has been initiated in \cite{key-10}, and later developed in \cite%
{maza13}, \cite{supa23,supa24}, and \cite{Zhou-Mazalov}.

Referring to the general framework introduced in Section~\ref{sec:fram}, the
benefit and damage functions (see (\ref{eq:1})) are as follows: 
\begin{align*}
B_{i}(e_{i})& =\alpha _{i}e_{i}-\frac{1}{2}e_{i}^{2},\quad i\in N, \\
D_i \left( E\right) & =\left\{ 
\begin{array}{cc}
\frac{1}{2}\beta E^{2}, & \text{for }i\in K, \\ 
0, & \text{for }i\in \bar{K},%
\end{array}%
\right.
\end{align*}%
where $\alpha _{i}$ and $\beta$ are positive parameters, and $E=\sum_{i\in
N}e_{i}$ is the total pollution.

Consider the scenario where coalition $S\subseteq N$, $|S|=s$, is formed in
the game, with $\ell $ vulnerable players, $|S\cap K|=\ell $, and $s-\ell $
invulnerable players, $|S\cap \bar{K}|=s-\ell $. The players in coalition $S$
maximize their joint payoff, while each non-member maximizes its own payoff.

\begin{remark}
The (usual) assumption in the literature is that all countries are
vulnerable to pollution. This model is nested in ours. Indeed, it suffices
to set $k=n$ and $l=s$ to recover the case with only vulnerable countries.
Further, if symmetry is required, then one sets $\alpha _{i}=\alpha $ for
all $i\in N$.
\end{remark}

Assuming that the game is played noncooperatively, the following proposition
gives the Nash equilibrium strategies. We use the notation $\alpha
_{S}=\sum_{i\in S}\alpha _{i}$.

\begin{proposition}
\label{prop: IEA asym} When coalition $S\neq \varnothing $ is formed, the
players' Nash equilibrium strategies are given by 
\begin{equation}
e_{i}^{S}=%
\begin{cases}
\alpha _{i}-\frac{\ell \beta }{1+\beta (s\ell +k-\ell )}\alpha _{N}, & \text{
if }i\in S\,\text{(any player in coalition $S$)}, \\ 
\alpha _{i}-\frac{\beta }{1+\beta (s\ell +k-\ell )}\alpha _{N}, & \text{ if }%
i\in (N\setminus S)\cap K\,\text{(vulnerable player outside $S$)}, \\ 
\alpha _{i}, & \text{ if }i\in (N\setminus S)\cap \bar{K}\,\text{%
(invulnerable player outside $S$)}.%
\end{cases}
\label{eq: strategies asym}
\end{equation}%
If $S$ is formed, the total equilibrium emissions are given by 
\begin{equation}
E^{S}=\frac{1}{1+\beta (s\ell +k-\ell )}\alpha _{N},  \label{eq: emissions}
\end{equation}%
and the players' payoffs by 
\begin{equation}
\pi _{i}^{S}=%
\begin{cases}
\frac{\alpha _{i}^{2}}{2}-\frac{(\ell ^{2}\beta ^{2}+\beta )\alpha _{N}^{2}}{%
2(1+\beta (s\ell +k-\ell ))^{2}}, & \text{ if }i\in S\cap K\,\text{%
(vulnerable player in $S$)}, \\ 
\frac{\alpha _{i}^{2}}{2}-\frac{\ell ^{2}\beta ^{2}\alpha _{N}^{2}}{%
2(1+\beta (s\ell +k-\ell ))^{2}}, & \text{ if }i\in S\cap \bar{K}\,\text{%
(invulnerable player in $S$)}, \\ 
\frac{\alpha _{i}^{2}}{2}-\frac{(\beta ^{2}+\beta )\alpha _{N}^{2}}{%
2(1+\beta (s\ell +k-\ell ))^{2}}, & \text{ if }i\in (N\setminus S)\cap K\,%
\text{(vulnerable player outside $S$)}, \\ 
\frac{\alpha _{i}^{2}}{2}, & \text{ if }i\in (N\setminus S)\cap \bar{K}\,%
\text{(invulnerable player outside $S$)}.%
\end{cases}
\label{eq: payoffs}
\end{equation}
\end{proposition}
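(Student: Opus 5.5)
The plan is to characterize the partial (Nash) equilibrium in the definition of $v^{\gamma}$ through first-order conditions, solve the resulting linear system by aggregating into total emissions $E$, and then substitute back to get the strategies and payoffs.

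\textbf{Step 1: optimization problems.} With $|S\cap K|=\ell$, the coalition's joint objective is
\[
\sum_{i\in S}\Big(\alpha_i e_i-\tfrac12 e_i^2\Big)-\tfrac{\ell\beta}{2}E^2 .
\]
Only the $\ell$ vulnerable members carry the damage term. A vulnerable outsider $j$ maximizes $\alpha_j e_j-\frac12 e_j^2-\frac{\beta}{2}E^2$. An invulnerable outsider maximizes $\alpha_j e_j-\frac12 e_j^2$.

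\textbf{Step 2: concavity.} Each objective is strictly concave in the player's own decision variable, since $E$ is linear in emissions and $\beta>0$. Hence the first-order conditions are necessary and sufficient, and they pin down each best response uniquely. Working from interior first-order conditions, as the stated formulas implicitly do, no nonnegativity constraint on $e_i$ is imposed.

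\textbf{Step 3: first-order conditions.} These read
\[
\alpha_i-e_i-\ell\beta E=0 \quad (i\in S),\qquad
\alpha_j-e_j-\beta E=0 \quad (j\in (N\setminus S)\cap K),\qquad
e_j=\alpha_j \quad (j\in (N\setminus S)\cap\bar K).
\]
Each equilibrium emission is therefore $\alpha_i$ minus a multiple of $E$: the multiple is $\ell\beta$ for members, $\beta$ for vulnerable outsiders and $0$ for invulnerable outsiders.

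\textbf{Step 4: solve for $E$.} Summing over all $i\in N$ gives
\[
E=\alpha_N-\beta\big(s\ell+(k-\ell)\big)E ,
\]
since $S$ has $s$ members, each with coefficient $\ell\beta$, and there are $k-\ell$ vulnerable outsiders, each with coefficient $\beta$. Solving yields \eqref{eq: emissions}. Plugging $E^S$ back into the first-order conditions gives \eqref{eq: strategies asym}. The case $\ell=0$ is covered automatically: members then simply emit $\alpha_i$.

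\textbf{Step 5: payoffs.} The key observation is that if $e_i=\alpha_i-x$, then $B_i(e_i)=\frac{\alpha_i^2}{2}-\frac{x^2}{2}$. Using $x=\ell\beta E^S$, $\beta E^S$ or $0$ as appropriate, and subtracting the damage $\frac{\beta}{2}(E^S)^2$ for vulnerable players, gives the four cases of \eqref{eq: payoffs}. For example, a vulnerable member gets
\[
\frac{\alpha_i^2}{2}-\frac{(\ell^2\beta^2+\beta)\alpha_N^2}{2(1+\beta(s\ell+k-\ell))^2}.
\]

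\textbf{Main obstacle.} The only delicate point is correct bookkeeping of the coalition's marginal damage. The term $\ell\beta E$ appears because all $\ell$ vulnerable members' damages are internalized by every member, vulnerable or not. This is what produces the $s\ell$ term in the denominator, and I would check it carefully on the special case $k=n$, $\ell=s$ to confirm that it matches the standard symmetric IEA formulas.
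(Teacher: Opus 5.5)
Your proposal is correct and follows essentially the same route as the paper. Both write the coalition's joint objective with the internalized damage $\frac{\ell\beta}{2}E^2$, derive the first-order conditions $\alpha_i-e_i-\ell\beta E=0$, $\alpha_j-e_j-\beta E=0$, $\alpha_j-e_j=0$, solve the linear system, and check second-order conditions (your strict-concavity step); your summation to get $E^S$ and the identity $B_i(\alpha_i-x)=\frac{\alpha_i^2}{2}-\frac{x^2}{2}$ just spell out the algebra the paper summarizes as ``solving this system of linear equations.''
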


\begin{proof}
The payoff of coalition $S$ is 
\begin{equation*}
\pi_S(e_S,\{e_j\}_{j\in N\setminus S})
=\sum_{i\in S}\left(\alpha_i e_i-\frac{e_i^2}{2}\right)-\frac{1}{2}\ell\beta E^2,
\end{equation*}	
while the payoff of any individual player $j\in N\setminus S$ is
\begin{equation*}
	\pi_j(e_S,\{e_i\}_{i\in N\setminus S})=
	\begin{cases}
		\alpha_j e_j-\frac{e_j^2}{2}-\frac{1}{2}\beta E^2, & \text{ if } i\in (N\setminus S)\cap K\, \text{(vulnerable player outside $S$)},\\
	\alpha_j e_j-\frac{e_j^2}{2}, &\text{ if } i\in (N\setminus S)\cap \bar{K}\, \text{(invulnerable player outside  $S$)}.
	\end{cases}
\end{equation*}	
The first-order conditions of the optimization problems of the coalition $S$ and all individual players when identifying the Nash equilibrium result in the system of equations:
\begin{equation*}
	\begin{cases}
		\alpha_i - e_i-\ell\beta E=0, & \text{ if } i\in S\, \text{(any player from $S$)},\\
		\alpha_i - e_i-\beta E=0, & \text{ if } i\in (N\setminus S)\cap K\, \text{(vulnerable player outside $S$)},\\
		\alpha_i - e_i=0, &\text{ if } i\in (N\setminus S)\cap \bar{K}\, \text{(invulnerable player outside  $S$)}.
	\end{cases}
\end{equation*}	
Solving this system of linear equations, we obtain the strategies given by \eqref{eq: strategies asym}. After verifying the second-order conditions, we obtain that this profile of strategies admits the Nash equilibrium in the scenario when coalition $S$ is formed and all other players behave as singletons. 
\end{proof}

We can make several observations from Proposition \ref{prop: IEA asym}:

\begin{enumerate}
\item The total emission $E^{S}$ is decreasing in the number of vulnerable
countries $\ell $ in coalition $S$, the total number of vulnerable countries 
$k$, and in the size $s$ of coalition $S$.

\item The largest emissions $e_{i}^{S}$ are made by invulnerable players
outside $S$, and the smallest emissions are by any player from coalition $S$%
, independently if this player is invulnerable or not. The emissions of the
vulnerable player outside $S$ are in between.

\item Comparing the payoffs, we conclude that for both types of players,
being a singleton (outside coalition $S$) is always beneficial compared to
being a member of coalition $S$ when $S$ is fixed. However, this is not true
if we consider whether a particular player joins or leaves coalition $S$,
because in the latter case the size of the coalition changes after such
transitions.
\end{enumerate}

\begin{corollary}
In the cooperative scenario where coalition $N$ forms, the players'
emissions are given by 
\begin{equation}
e_{i}^{c}=\alpha _{i}-\frac{\beta k}{1+\beta kn}\alpha _{N},\text{ for any }%
i\in N,
\end{equation}%
and the total emissions by 
\begin{equation}
E^{c}=\frac{1}{1+\beta kn}\alpha _{N}.
\end{equation}%
The total payoff of the grand coalition $N$ is given by 
\begin{equation}
\pi _{N}^{c}=\sum_{i\in N}\frac{\alpha _{i}^{2}}{2}-\frac{\beta k}{1+\beta kn%
}\alpha _{N}^{2}.
\end{equation}
\end{corollary}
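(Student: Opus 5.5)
The plan is to obtain the corollary by specializing Proposition~\ref{prop: IEA asym} to $S=N$. In that case $s=n$, and since every vulnerable country belongs to $N$, $\ell=|N\cap K|=k$. The common denominator appearing in Proposition~\ref{prop: IEA asym} becomes
\[
1+\beta(s\ell+k-\ell)=1+\beta(nk+k-k)=1+\beta kn .
\]
Because $N\setminus S=\varnothing$, only the first case of \eqref{eq: strategies asym} is relevant. It gives $e_i^c=\alpha_i-\frac{\beta k}{1+\beta kn}\alpha_N$ for every $i\in N$. Likewise \eqref{eq: emissions} gives $E^c=\alpha_N/(1+\beta kn)$ at once. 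This matches the first-order conditions $\alpha_i-e_i-k\beta E=0$: summing them over $N$ yields $\alpha_N-E-nk\beta E=0$.

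For the total payoff I would add the payoffs in \eqref{eq: payoffs} over the $k$ members of $S\cap K=K$ and the $n-k$ members of $S\cap\bar K=\bar K$. Each of these payoffs is $\alpha_i^2/2$ minus a common term over $2(1+\beta kn)^2$. The numerators of the common terms sum to
\[
k(k^2\beta^2+\beta)+(n-k)k^2\beta^2=k\beta(1+\beta kn).
\]
One factor $(1+\beta kn)$ therefore cancels, and the total payoff reduces to $\sum_i\alpha_i^2/2$ minus a multiple of $\frac{\beta k}{1+\beta kn}\alpha_N^2$.

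As a cross-check I would compute the total payoff directly. With $e_i=\alpha_i-k\beta E$ we get $\alpha_ie_i-e_i^2/2=(\alpha_i^2-k^2\beta^2E^2)/2$. Hence
\[
\pi_N^c=\sum_i\frac{\alpha_i^2}{2}-\frac{nk^2\beta^2E^2}{2}-\frac{k\beta E^2}{2}
=\sum_i\frac{\alpha_i^2}{2}-\frac{k\beta(1+\beta kn)}{2}E^2 .
\]

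The only delicate step is this last simplification, and here I expect a discrepancy with the statement. Both routes give the coefficient $\frac{\beta k}{2(1+\beta kn)}$, not $\frac{\beta k}{1+\beta kn}$. So the displayed formula for $\pi_N^c$ appears to be missing a factor $\tfrac12$, which is presumably a typo. In the proof I would derive
\[
\pi_N^c=\sum_{i\in N}\frac{\alpha_i^2}{2}-\frac{\beta k}{2(1+\beta kn)}\alpha_N^2
\]
and point out the factor. The emissions formulas are unaffected.
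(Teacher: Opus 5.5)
Your proposal is correct and follows the paper's own proof, which simply substitutes $s=n$, $\ell=k$ into Proposition~\ref{prop: IEA asym}. The missing factor $\tfrac12$ you flag is a genuine typo in the displayed $\pi_N^c$. The paper's own Table~\ref{tab:payoff} gives $v(\{1,2,3,4\})=27.813=43-\frac{0.15}{2\cdot 1.6}\cdot 324$, which matches your $\frac{\beta k}{2(1+\beta kn)}\alpha_N^2$; the stated coefficient would instead give $12.625$.
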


\begin{proof}
This immediately follows from Proposition \ref{prop: IEA asym} by substituting $s=n$, $\ell=k$ into equations \eqref{eq: strategies asym}, \eqref{eq: emissions}, and \eqref{eq: payoffs}.
\end{proof}

\begin{corollary}
In the fully noncooperative scenario, i.e., when all players from $N$ behave
as singletons, the players' Nash equilibrium strategies are as follows: 
\begin{equation}
e_{i}^{nc}=%
\begin{cases}
\alpha _{i}-\frac{\beta }{1+\beta k}\alpha _{N}, & \text{ if }i\in K\,\text{%
(vulnerable player)}, \\ 
\alpha _{i}, & \text{ if }i\in \bar{K}\,\text{(invulnerable player)}.%
\end{cases}%
\end{equation}%
The total emissions are given by 
\begin{equation}
E^{nc}=\frac{1}{1+\beta k}\alpha _{N},
\end{equation}%
and player $i$'s payoff by 
\begin{equation}
\pi _{i}^{nc}=%
\begin{cases}
\frac{\alpha _{i}^{2}}{2}-\frac{(\beta ^{2}+\beta )\alpha _{N}^{2}}{%
2(1+\beta k)^{2}}, & \text{ if }i\in K\,\text{(vulnerable player)}, \\ 
\frac{\alpha _{i}^{2}}{2}, & \text{ if }i\in \bar{K}\,\text{(invulnerable
player)}.%
\end{cases}%
\end{equation}
\end{corollary}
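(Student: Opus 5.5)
The plan is to specialize Proposition~\ref{prop: IEA asym}, exactly as was done for the cooperative corollary. The fully noncooperative configuration must be represented by some coalition $S\neq\varnothing$, since the proposition excludes the empty set. The natural choice is a singleton $S=\{j\}$: a one-member coalition maximizing its own payoff is strategically the same as a singleton non-member. Everyone then acts individually, which is exactly the noncooperative configuration. We then substitute $s=1$ into \eqref{eq: strategies asym}, \eqref{eq: emissions} and \eqref{eq: payoffs}, and check that the outcome does not depend on which $j$ is chosen.

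The key computation is that with $s=1$ the denominator becomes $1+\beta(\ell+k-\ell)=1+\beta k$, whether $\ell=0$ or $\ell=1$. This gives $E^{nc}=\alpha_N/(1+\beta k)$ immediately from \eqref{eq: emissions}. For strategies, two cases arise. If $j\in K$, then $\ell=1$, and the in-coalition formula gives $\alpha_j-\frac{\beta}{1+\beta k}\alpha_N$, matching the vulnerable non-member formula. If $j\in\bar K$, then $\ell=0$, and the in-coalition formula gives $\alpha_j$, matching the invulnerable non-member formula. All other players are outside $S$ and fall under the second or third line of \eqref{eq: strategies asym}, again with denominator $1+\beta k$.

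The payoffs follow the same pattern. For $j\in K$ with $\ell=1$, the coefficient $\ell^2\beta^2+\beta$ becomes $\beta^2+\beta$, identical to the vulnerable outsider formula. For $j\in\bar K$ with $\ell=0$, the term $\ell^2\beta^2$ vanishes, leaving $\alpha_j^2/2$. Outsiders are read off directly from \eqref{eq: payoffs}.

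The only point needing care is justifying the identification of the fully noncooperative equilibrium with the case $S=\{j\}$. I would handle this directly at the level of first-order conditions. For $S=\{j\}$, the coalition's condition $\alpha_j-e_j-\ell\beta E=0$ coincides with the individual condition of player $j$ ($\beta$ if vulnerable, $0$ if not). The system therefore equals the system of the noncooperative game. Uniqueness of its solution, with second-order conditions holding by concavity, is already established in the proof of Proposition~\ref{prop: IEA asym}. The remaining steps are routine substitution.
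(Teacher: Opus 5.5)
Your proposal is correct and follows the paper's proof: both specialize Proposition~\ref{prop: IEA asym} to a singleton coalition with $s=1$, using $\ell=1$ for a vulnerable and $\ell=0$ for an invulnerable member, so the denominator collapses to $1+\beta k$. The differences are cosmetic: you read $E^{nc}$ directly off \eqref{eq: emissions} rather than summing the individual emissions, and you state explicitly, via the coinciding first-order conditions, why a one-member coalition reproduces the fully noncooperative game, which the paper leaves implicit.
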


\begin{proof}
This follows directly from Proposition \ref{prop: IEA asym} by substituting $s=1$, $\ell=1$ for a vulnerable player and $s=1$, $\ell=0$ for an invulnerable player into the equations \eqref{eq: strategies asym} and \eqref{eq: payoffs}. Then we can find the total emissions by summarizing the equilibrium individual emissions over the set of players $N$.
\end{proof}

As it has been repeatedly shown in the literature (see \cite%
{Jorgensen2010433} for a survey), total emissions are lower under (full)
cooperation than under a noncooperative mode of play.

We examine when players (vulnerable and invulnerable) have incentives to
join or leave coalition $S$. To do this, we compute $\delta_i^-(S)$ for any
player $i\in S$ and $\delta_i^+(S)$ for any player $i\notin S$, depending on
whether player $i$ is vulnerable, $i\in K$, or not, $i\in \bar{K}$:

\begin{enumerate}
\item \textit{Invulnerable player }$i$\textit{\ in coalition} $S$, $i\in
S\cap \bar{K}$, $S\neq \{i\}$. Taking into account the expression of player $%
i$'s payoff given in Proposition \ref{prop: IEA asym}, by formula %
\eqref{eq:d-} we obtain: 
\begin{equation}
\delta _{i}^{-}(S)=-\frac{\ell ^{2}\beta ^{2}\alpha _{N}^{2}}{2(1+\beta
(s\ell +k-\ell ))^{2}}+c,   \label{eq: inv-}
\end{equation}
which can be of any sign depending on parameters of the game.

\item \textit{Invulnerable player outside coalition }$S$, $i\in (N\setminus
S)\cap \bar{K}$, $S\neq \varnothing ,S\subseteq N\setminus \{i\}$. By
formula \eqref{eq:d+} we find that 
\begin{equation}
\delta _{i}^{+}(S)=-\frac{\ell ^{2}\beta ^{2}\alpha _{N}^{2}}{2(1+\beta
((s+1)\ell +k-\ell ))^{2}}<0.  \label{eq: inv+}
\end{equation}

\item \textit{Vulnerable player }$i$\textit{\ in coalition }$S$, $i\in S\cap
K$, $S\neq \{i\}$. Taking into account the vulnerable player $i$'s payoff
defined in Proposition \ref{prop: IEA asym}, by formula \eqref{eq:d-} we
obtain: 
\begin{equation}
\delta _{i}^{-}(S)=\left( \frac{(\beta ^{2}+\beta )}{2(1+\beta ((s-1)(\ell
-1)+k-\ell +1))^{2}}-\frac{(\ell ^{2}\beta ^{2}+\beta )}{2(1+\beta (s\ell
+k-\ell ))^{2}}\right) \alpha _{N}^{2}+c,  \label{eq: vul-}
\end{equation}%
which can be of any sign depending on parameters of the game.

\item \textit{Vulnerable player outside coalition }$S$, $i\in (N\setminus
S)\cap K$, $S\neq \varnothing ,S\subseteq N\setminus \{i\}$. By formula %
\eqref{eq:d+} we find that 
\begin{equation}
\delta _{i}^{+}(S)=\left( \frac{(\beta ^{2}+\beta )}{2(1+\beta (s\ell
+k-\ell ))^{2}}-\frac{((\ell +1)^{2}\beta ^{2}+\beta )}{2(1+\beta
((s+1)(\ell +1)+k-\ell -1))^{2}}\right) \alpha _{N}^{2},  \label{eq: vul+}
\end{equation}%
which again can be of any sign depending on parameters of the game.

\item \textit{Invulnerable player }$i$\textit{\ joining }$\varnothing $%
\textit{\ or leaving }$\{i\}$\textit{.} According to our convention and
formulas (\ref{eq:d+}), (\ref{eq:d-}), 
\begin{equation*}
\delta _{i}^{+}(\varnothing )=\delta _{i}^{-}(\{i\})=\frac{\alpha _{i}^{2}}{2%
}>0.
\end{equation*}

\item \textit{Vulnerable player }$i$\textit{\ joining }$\varnothing $ 
\textit{or leaving }$\{i\}$. According to our convention and formulas (\ref%
{eq:d+}), (\ref{eq:d-}), 
\begin{equation*}
\delta _{i}^{+}(\varnothing )=\delta _{i}^{-}(\{i\})=\frac{\alpha _{i}^{2}}{2%
}-\frac{(\beta ^{2}+\beta )\alpha _{N}^{2}}{2(1+\beta k)^{2}},
\end{equation*}%
which can be of any sign depending on parameters of the game.
\end{enumerate}

Based on the above cases of player's possible transitions, we can draw the
following conclusions:

\begin{enumerate}
\item If an invulnerable country belongs to $S$, $|S|>1$, it has an
incentive to leave $S$ when 
\begin{equation*}
c<\frac{\ell ^{2}\beta ^{2}\alpha _{N}^{2}}{2(1+\beta (s\ell +k-\ell ))^{2}}.
\end{equation*}

\item If an invulnerable country does not belong to $S$, $S\neq \varnothing$%
, it never has an incentive to join~$S$.

\item The sign of $\delta _{i}^{+}(S)$, $S\neq \varnothing$, is independent
of any values $\alpha _{j}$, $j\in N$, but it is dependent of them for $%
\delta _{i}^{+}(\varnothing )$ and $\delta _{i}^{-}(S)$ for any $S$.

\item If a vulnerable country $i$ belongs to $S\neq \{i\}$, it has an
incentive to leave $S$ if 
\begin{equation*}
c<\left( \frac{(\ell ^{2}\beta ^{2}+\beta )}{2(1+\beta (s\ell +k-\ell ))^{2}}%
-\frac{(\beta ^{2}+\beta )}{2(1+\beta ((s-1)(\ell -1)+k-\ell +1))^{2}}%
\right) \alpha _{N}^{2}.
\end{equation*}%
If $c=0$, then this inequality is reduced to the following one: 
\begin{align}
& \beta ^{2}(k^{2}-2k\ell -3\ell ^{2}-4k\ell ^{2}-k^{2}\ell ^{2}+8\ell
^{3}+4k\ell ^{3}-4\ell ^{4}+2k\ell s+2\ell ^{2}s+2k\ell ^{2}s  \notag \\
& -8\ell ^{3}s-2k\ell ^{3}s+4\ell ^{4}s+2\ell ^{3}s^{2}-\ell ^{4}s^{2}) 
\notag \\
& +\beta (-4-2k+6\ell +2k\ell -7\ell ^{2}-2k\ell ^{2}+4\ell
^{3}+4s+2ks-6\ell s+4\ell ^{2}s-2\ell ^{3}s-s^{2}+2\ell s^{2})  \notag \\
& +(-3+2\ell -\ell ^{2}+2s)<0.  \label{eq: d-<0}
\end{align}%
Since inequality \eqref{eq: d-<0} does not depend on $\alpha _{j}$, $j\neq i$%
, the vulnerable players are symmetric when deciding on leaving $S$ when $c=0
$.

\item If a vulnerable country does not belong to $S\neq \varnothing $, it
has an incentive to join $S$ if 
\begin{align}
& \beta ^{2}(2k\ell -2k^{2}\ell -\ell ^{2}+4k\ell ^{2}-k^{2}\ell ^{2}-2\ell
^{3}+2k\ell ^{3}-\ell ^{4}+2ks+2\ell ^{2}s-4k\ell ^{2}s  \notag \\
& +4\ell ^{3}s-2k\ell ^{3}s+2\ell ^{4}s+s^{2}+2\ell s^{2}-2\ell
^{3}s^{2}-\ell ^{4}s^{2})  \notag \\
& +\beta (2\ell -2k\ell +3\ell ^{2}-2k\ell ^{2}+2\ell ^{3}+2s+2ks-2\ell
^{2}s-2\ell ^{3}s+s^{2}+2\ell s^{2})  \notag \\
& +(2s-\ell ^{2})>0.  \label{eq: d+>0}
\end{align}%
As this expression does not depend on $\alpha _{j}$, $j\neq i$, the
vulnerable players are symmetric when deciding on joining $S$.

\item At the origin of the scenario ($S_{0}=\varnothing $), any invulnerable
player can join $\varnothing $ (i.e., can start the coalition formation
process). A vulnerable player $i$ starts the coalition formation process iff 
\begin{equation*}
\alpha _{i}^{2}(1+\beta k)^2>\beta (1+\beta )\alpha _{N}^{2}.
\end{equation*}

\item We analyze in detail the case when there is no cost for leaving
coalition $S$, i.e. $c=0$. After the first transition ($|S|=1$), only a
vulnerable player will join. A vulnerable player has an incentive to join
coalition $S$ consisting of another vulnerable player iff condition %
\eqref{eq: d+>0} holds true for $s=\ell =1$, which is equivalent to the
inequality 
\begin{equation*}
\beta ^{2}(4+4k-3k^{2})+\beta (8-2k)+1>0.
\end{equation*}

Any vulnerable player belonging to coalition $S$ containing two vulnerable
players has no incentive to leave the coalition iff 
\begin{equation*}
\beta^2 (4 + 4 k - 3 k^2)+ \beta (8 - 2 k)+ 1>0. 
\end{equation*}
For $k=2$ this inequality is equivalent to $\beta> -\frac{1}{4}$, and it is
always satisfied.

The third vulnerable player has an incentive to join iff 
\begin{equation*}
(16-16k)+\beta (-48k-16k^{2})>0\Leftrightarrow \beta <-\frac{k-1}{k(k+3)},
\end{equation*}%
which is never satisfied.

The fourth vulnerable player has an incentive to join iff 
\begin{equation*}
-6+\beta (-120-36k)+\beta ^{2}(-864-336k-30k^{2})>0,
\end{equation*}%
which is never satisfied.

\item As a result, for the case when $c=0$ we obtain that the absorbing
state may be only a coalition consisting of two vulnerable players but never
more, and it also depends on $\beta $ and $k$ ($\beta >0$ for $k=2$; $\beta
\in (0,(1+2\sqrt{3})/11)$ for $k=3$; $\beta \in (0,1/2\sqrt{7})$ for $k=4$; $%
\beta \in (0,(-1+2\sqrt{13})/51)$ for $k=5$, etc.)

\item If the game $v$ is superadditive and $c=0$, by combining the above
results with Proposition~\ref{prop:feas} (Item 1), we find that any scenario
starting with $\varnothing\rightarrow \{i\}$, where $i$ is invulnerable, is
infeasible.
\end{enumerate}

\subsection{Numerical example}

We apply the above results to an example with four countries $N=\{1,2,3,4\}$%
, where countries 1, 2 and 3 are vulnerable to pollution, whereas country 4
is not. To go further in the analysis, we choose the following parameter
values:%
\begin{equation*}
\alpha _{1}=6,\text{ \ }\alpha _{2}=5,\text{ \ }\alpha _{3}=4,\text{ \ }
\alpha _{4}=3,\text{ \ }\beta =0.05.
\end{equation*}
Table~\ref{tab:1} reports the results of emissions, for all possible
coalitions given in Proposition \ref{prop: IEA asym}.

\begin{table}[htb]
\caption{Strategies (level of emissions)}
\label{tab:1}
\begin{center}
{\footnotesize \ \hspace*{-1.5cm}%
\begin{tabular}{c|ccccc}
\hline
Coalition & $e_{1}$ & $e_{2}$ & $e_{3}$ & $e_{4}$ & $E$ \\ \hline
$\{1\}$ & $5.217$ & $4.217$ & $3.217$ & $3$ & $15.652$ \\ 
$\{2\}$ & $5.217$ & $4.217$ & $3.217$ & $3$ & $15.652$ \\ 
$\{3\}$ & $5.217$ & $4.217$ & $3.217$ & $3$ & $15.652$ \\ 
$\{4\}$ & $5.217$ & $4.217$ & $3.217$ & $3$ & $15.652$ \\ 
$\{1,2\}$ & $4.560$ & $3.560$ & $3.280$ & $3$ & $14.400$ \\ 
$\{1,3\}$ & $4.560$ & $4.280$ & $2.560$ & $3$ & $14.400$ \\ 
$\{1,4\}$ & $5.250$ & $4.250$ & $3.250$ & $2.250$ & $15.000$ \\ 
$\{2,3\}$ & $5.280$ & $3.560$ & $2.560$ & $3$ & $14.400$ \\ 
$\{2,4\}$ & $5.250$ & $4.250$ & $3.250$ & $2.250$ & $15.000$ \\ 
$\{3,4\}$ & $5.250$ & $4.250$ & $3.250$ & $2.250$ & $15.000$ \\ 
$\{1,2,3\}$ & $4.138$ & $3.138$ & $2.138$ & $3$ & $12.414$ \\ 
$\{1,2,4\}$ & $4.667$ & $3.667$ & $3.333$ & $1.667$ & $13.333$ \\ 
$\{1,3,4\}$ & $4.667$ & $4.333$ & $2.667$ & $1.667$ & $13.333$ \\ 
$\{2,3,4\}$ & $5.333$ & $3.667$ & $2.667$ & $1.667$ & $13.333$ \\ 
$\{1,2,3,4\}$ & $4.3125$ & $3.3125$ & $2.3125$ & $1.3125$ & $11.250$ \\ 
\hline
\end{tabular}
}
\end{center}
\end{table}

In this example, we have two asymmetries, one related to the environmental
damage cost and the other in the different values of $\alpha$. A higher
value of $\alpha$ means that player has a more efficient economy. Table~\ref%
{tab:payoff} gives the corresponding payoffs and the cooperative game $v$. 
\begin{table}[h!]
\caption{Payoffs}
\label{tab:payoff}{\footnotesize \ }
\par
\begin{center}
{\footnotesize \ 
\begin{tabular}{c|c|cccc}
\hline
Coalition $S$ & $v\left( S\right) $ & \multicolumn{4}{c}{Players' Payoffs}
\\ \hline
&  & $\pi _{1}(e^{S})$ & $\pi _{2}(e^{S})$ & $\pi _{3}(e^{S})$ & $\pi
_{4}(e^{S})$ \\ \hline
$\{1\}$ & $11.569$ & $11.569$ & $6.069$ & $1.569$ & $4.500$ \\ 
$\{2\}$ & $6.069$ & $11.569$ & $6.069$ & $1.569$ & $4.500$ \\ 
$\{3\}$ & $1.569$ & $11.569$ & $6.069$ & $1.569$ & $4.500$ \\ 
$\{4\}$ & $4.500$ & $11.569$ & $6.069$ & $1.569$ & $4.500$ \\ 
$\{1,2\}$ & $18.058$ & $11.779$ & $6.279$ & $2.557$ & $4.500$ \\ 
$\{1,3\}$ & $13.558$ & $11.779$ & $7.057$ & $1.779$ & $4.500$ \\ 
$\{1,4\}$ & $16.313$ & $12.094$ & $6.594$ & $2.094$ & $4.219$ \\ 
$\{2,3\}$ & $8.058$ & $12.557$ & $6.279$ & $1.779$ & $4.500$ \\ 
$\{2,4\}$ & $10.813$ & $12.094$ & $6.594$ & $2.094$ & $4.219$ \\ 
$\{3,4\}$ & $6.313$ & $12.094$ & $6.594$ & $2.094$ & $4.219$ \\ 
$\{1,2,3\}$ & $21.741$ & $12.414$ & $6.914$ & $2.414$ & $4.500$ \\ 
$\{1,2,4\}$ & $23.444$ & $12.667$ & $7.167$ & $3.333$ & $3.611$ \\ 
$\{1,3,4\}$ & $18.944$ & $12.667$ & $7.833$ & $2.667$ & $3.611$ \\ 
$\{2,3,4\}$ & $13.444$ & $13.333$ & $7.167$ & $2.667$ & $3.611$ \\ 
$\{1,2,3,4\}$ & $27.813$ & $13.412$ & $7.912$ & $3.412$ & $3.076$ \\ \hline
\end{tabular}
}
\end{center}
\end{table}

\begin{remark}
\label{rem: super 2} We can easily notice that the characteristic function
with values given in Table~\ref{tab:payoff} is superadditive. However, this
is not a general result. For example, if $\beta =0.44$ and we keep all other
parameters the same, then the game is not superadditive. We observe that $%
v(\{1\})=-1.07$, $v(\{2\})=-6.57$, and $v(\{1,2\})=-7.92$. Since $%
v(\{1\})+v(\{2\})>v(\{1,2\})$, for vulnerable countries 1 and 2 it is not
profitable to form a coalition.\footnote{%
In this example, the equilibrium emissions are positive for all the players,
so we have an interior solution with negative resulting payoffs.}
\end{remark}

\subsubsection{Without cost of leaving ($c=0$)}

\label{sec:c=0} We proceed state by state, starting with the empty
coalition. We omit the details of computation, solely based on (\ref{eq:d-})
and (\ref{eq:d+}), and in Figure \ref{fig: trans graph} we represent the
transition graph (absorbing states are in boldface), together with all
probabilities of transition in Table \ref{tab: trans prob}. We recall that
these probabilities are proportional to $\delta^+_i(S)$ or $|\delta^-_i(S)|$.

\begin{figure}[h!]
\begin{center}
\psset{unit=0.8cm} \pspicture(-0.2,-1)(12,11) \cnodeput(0,5){0}{$\varnothing$%
} \cnodeput(3,8){1}{$1$} \cnodeput(3,6){2}{$2$} \cnodeput(3,4){3}{$3$} %
\cnodeput(3,2){4}{$4$} \cnodeput(6,10){12}{$\mathbf{12}$} \cnodeput(6,8){13}{%
$\mathbf{13}$} \cnodeput(6,6){14}{$14$} \cnodeput(6,4){23}{$\mathbf{23}$} %
\cnodeput(6,2){24}{$24$} \cnodeput(6,0){34}{$34$} \cnodeput(9,8){123}{$123$} %
\cnodeput(9,6){124}{$124$} \cnodeput(9,4){134}{$134$} \cnodeput(9,2){234}{$%
234$} \cnodeput(12,5){1234}{$1234$} \ncarc[arcangle=0]{->}{0}{1} 
\ncarc[arcangle=0]{->}{0}{2} \ncarc[arcangle=0]{->}{0}{3} %
\ncarc[arcangle=0]{->}{0}{4} \ncarc[arcangle=0]{->}{1}{12} %
\ncarc[arcangle=0]{->}{1}{13} \ncarc[arcangle=0]{->}{2}{12} %
\ncarc[arcangle=0]{->}{2}{23} \ncarc[arcangle=0]{->}{3}{13} %
\ncarc[arcangle=0]{->}{3}{23} \ncarc[arcangle=0]{->}{4}{14} %
\ncarc[arcangle=0]{->}{4}{24} \ncarc[arcangle=0]{->}{4}{34} %
\ncarc[arcangle=0]{->}{14}{1} \ncarc[arcangle=0]{->}{14}{124} %
\ncarc[arcangle=0]{->}{14}{134} \ncarc[arcangle=0]{->}{24}{2} %
\ncarc[arcangle=0]{->}{24}{124} \ncarc[arcangle=0]{->}{24}{234} %
\ncarc[arcangle=0]{->}{34}{3} \ncarc[arcangle=0]{->}{34}{134} %
\ncarc[arcangle=0]{->}{34}{234} \ncarc[arcangle=0]{->}{123}{12} %
\ncarc[arcangle=0]{->}{123}{13} \ncarc[arcangle=0]{->}{123}{23} %
\ncarc[arcangle=0]{->}{124}{12} \ncarc[arcangle=0]{->}{124}{1234} %
\ncarc[arcangle=0]{->}{134}{13} \ncarc[arcangle=0]{->}{134}{1234} %
\ncarc[arcangle=0]{->}{234}{23} \ncarc[arcangle=0]{->}{234}{1234} %
\ncarc[arcangle=0]{->}{1234}{123} \nccircle{->}{12}{0.3cm} %
\nccircle{->}{13}{0.3cm} \nccircle{->}{23}{-0.3cm} \endpspicture
\end{center}
\caption{Transition graph with $c=0$}
\label{fig: trans graph}
\end{figure}

\begin{table}[h!]
\caption{Transition probabilities with $c=0$}
\label{tab: trans prob}
\begin{center}
\begin{tabular}{|c|c||c|c|}
\hline
$p_{\varnothing,1}$ & 0.488 & $p_{24,2}$ & 0.296 \\ 
$p_{\varnothing,2}$ & 0.256 & $p_{24,124}$ & 0.352 \\ 
$p_{\varnothing,3}$ & 0.066 & $p_{24,234}$ & 0.352 \\ 
$p_{\varnothing,4}$ & 0.190 & $p_{34,3}$ & 0.296 \\ 
$p_{1,12}$ & 0.5 & $p_{34,134}$ & 0.352 \\ 
$p_{1,13}$ & 0.5 & $p_{34,234}$ & 0.352 \\ 
$p_{2,12}$ & 0.5 & $p_{123,12}$ & 0.333 \\ 
$p_{2,23}$ & 0.5 & $p_{123,13}$ & 0.333 \\ 
$p_{3,13}$ & 0.5 & $p_{123,23}$ & 0.333 \\ 
$p_{3,23}$ & 0.5 & $p_{124,12}$ & 0.918 \\ 
$p_{4,14}$ & 0.333 & $p_{124,1234}$ & 0.082 \\ 
$p_{4,24}$ & 0.333 & $p_{134,13}$ & 0.918 \\ 
$p_{4,34}$ & 0.333 & $p_{134,1234}$ & 0.082 \\ 
$p_{14,1}$ & 0.296 & $p_{234,23}$ & 0.918 \\ 
$p_{14,124}$ & 0.352 & $p_{234,1234}$ & 0.082 \\ 
$p_{14,134}$ & 0.352 & $p_{1234,123}$ & 1 \\ \hline
\end{tabular}%
\end{center}
\end{table}

We list now all possible scenarios starting from $\varnothing$ and compute
probability of their occurrence by equation \eqref{prob occurrence} (product
of all transition probabilities along the scenario, assuming the probability
to start from $\varnothing$ is 1).

\begin{itemize}
\item $\mathcal{S}_1$: $\varnothing\rightarrow 1\rightarrow 12$, ${\mathbb{P}%
}(\mathcal{S}_1) = 0.244$

\item $\mathcal{S}_2$: $\varnothing\rightarrow 1\rightarrow 13$, ${\mathbb{P}%
}(\mathcal{S}_2) = 0.244$

\item $\mathcal{S}_3$: $\varnothing\rightarrow 2\rightarrow 12$, ${\mathbb{P}%
}(\mathcal{S}_3) = 0.128$

\item $\mathcal{S}_4$: $\varnothing\rightarrow 2\rightarrow 23$, ${\mathbb{P}%
}(\mathcal{S}_4) = 0.128$

\item $\mathcal{S}_5$: $\varnothing\rightarrow 3\rightarrow 13$, ${\mathbb{P}%
}(\mathcal{S}_5) = 0.033$

\item $\mathcal{S}_6$: $\varnothing\rightarrow 3\rightarrow 23$, ${\mathbb{P}%
}(\mathcal{S}_6) = 0.033$

\item $\mathcal{S}_7$: $\varnothing\rightarrow 4\rightarrow 14\rightarrow
124\rightarrow 1234\rightarrow 123\rightarrow 12$, ${\mathbb{P}}(\mathcal{S}%
_7) = 0.0006$

\item $\mathcal{S}_8$: $\varnothing\rightarrow 4\rightarrow 14\rightarrow
124\rightarrow 1234\rightarrow 123\rightarrow 13$, ${\mathbb{P}}(\mathcal{S}%
_8) = 0.0006$

\item $\mathcal{S}_9$: $\varnothing\rightarrow 4\rightarrow 14\rightarrow
124\rightarrow 1234\rightarrow 123\rightarrow 23$, ${\mathbb{P}}(\mathcal{S}%
_9) = 0.0006$

\item $\mathcal{S}_{10}$: $\varnothing\rightarrow 4\rightarrow 14\rightarrow
124\rightarrow 12$, ${\mathbb{P}}(\mathcal{S}_{10}) = 0.0205$

\item $\mathcal{S}_{11}$: $\varnothing\rightarrow 4\rightarrow 14\rightarrow
134\rightarrow 13$, ${\mathbb{P}}(\mathcal{S}_{11}) = 0.0205$

\item $\mathcal{S}_{12}$: $\varnothing\rightarrow 4\rightarrow 14\rightarrow
134\rightarrow 1234\rightarrow 123\rightarrow 12$, ${\mathbb{P}}(\mathcal{S}%
_{12}) = 0.0006$

\item $\mathcal{S}_{13}$: $\varnothing\rightarrow 4\rightarrow 14\rightarrow
134\rightarrow 1234\rightarrow 123\rightarrow 13$, ${\mathbb{P}}(\mathcal{S}%
_{13}) = 0.0006$

\item $\mathcal{S}_{14}$: $\varnothing\rightarrow 4\rightarrow 14\rightarrow
134\rightarrow 1234\rightarrow 123\rightarrow 23$, ${\mathbb{P}}(\mathcal{S}%
_{14}) = 0.0006$

\item $\mathcal{S}_{15}$: $\varnothing\rightarrow 4\rightarrow 14\rightarrow
1\rightarrow 12$, ${\mathbb{P}}(\mathcal{S}_{15}) = 0.0094$

\item $\mathcal{S}_{16}$: $\varnothing\rightarrow 4\rightarrow 14\rightarrow
1\rightarrow 13$, ${\mathbb{P}}(\mathcal{S}_{16}) = 0.0094$

\item $\mathcal{S}_{17}$: $\varnothing\rightarrow 4\rightarrow 24\rightarrow
2 \rightarrow 12$, ${\mathbb{P}}(\mathcal{S}_{17}) = 0.0094$

\item $\mathcal{S}_{18}$: $\varnothing\rightarrow 4\rightarrow 24\rightarrow
2 \rightarrow 23$, ${\mathbb{P}}(\mathcal{S}_{18}) = 0.0094$

\item $\mathcal{S}_{19}$: $\varnothing\rightarrow 4\rightarrow 24\rightarrow
124\rightarrow 12$, ${\mathbb{P}}(\mathcal{S}_{19}) = 0.0205$

\item $\mathcal{S}_{20}$: $\varnothing\rightarrow 4\rightarrow 24\rightarrow
124\rightarrow 1234\rightarrow 12$, ${\mathbb{P}}(\mathcal{S}_{20}) = 0.0006 
$

\item $\mathcal{S}_{21}$: $\varnothing\rightarrow 4\rightarrow 24\rightarrow
124\rightarrow 1234\rightarrow 13$, ${\mathbb{P}}(\mathcal{S}_{21}) = 0.0006 
$

\item $\mathcal{S}_{22}$: $\varnothing\rightarrow 4\rightarrow 24\rightarrow
124\rightarrow 1234\rightarrow 23$, ${\mathbb{P}}(\mathcal{S}_{22}) = 0.0006 
$

\item $\mathcal{S}_{23}$: $\varnothing\rightarrow 4\rightarrow 24\rightarrow
234\rightarrow 23$, ${\mathbb{P}}(\mathcal{S}_{23}) = 0.0205$

\item $\mathcal{S}_{24}$: $\varnothing\rightarrow 4\rightarrow 24\rightarrow
234\rightarrow 1234\rightarrow 12$, ${\mathbb{P}}(\mathcal{S}_{24}) = 0.0006 
$

\item $\mathcal{S}_{25}$: $\varnothing\rightarrow 4\rightarrow 24\rightarrow
234\rightarrow 1234\rightarrow 13$, ${\mathbb{P}}(\mathcal{S}_{25}) = 0.0006 
$

\item $\mathcal{S}_{26}$: $\varnothing\rightarrow 4\rightarrow 24\rightarrow
234\rightarrow 1234\rightarrow 23$, ${\mathbb{P}}(\mathcal{S}_{26}) = 0.0006 
$

\item $\mathcal{S}_{27}$: $\varnothing\rightarrow 4\rightarrow 34\rightarrow
3\rightarrow 13$, ${\mathbb{P}}(\mathcal{S}_{27}) = 0.0094$

\item $\mathcal{S}_{28}$: $\varnothing\rightarrow 4\rightarrow 34\rightarrow
3\rightarrow 23$, ${\mathbb{P}}(\mathcal{S}_{28}) = 0.0094$

\item $\mathcal{S}_{29}$: $\varnothing\rightarrow 4\rightarrow 34\rightarrow
134\rightarrow 13$, ${\mathbb{P}}(\mathcal{S}_{29}) = 0.0205$

\item $\mathcal{S}_{30}$: $\varnothing\rightarrow 4\rightarrow 34\rightarrow
134\rightarrow 1234 \rightarrow 123 \rightarrow 12$, ${\mathbb{P}}(\mathcal{S%
}_{30}) = 0.0006 $

\item $\mathcal{S}_{31}$: $\varnothing\rightarrow 4\rightarrow 34\rightarrow
134\rightarrow 1234 \rightarrow 123 \rightarrow 13$, ${\mathbb{P}}(\mathcal{S%
}_{31}) = 0.0006 $

\item $\mathcal{S}_{32}$: $\varnothing\rightarrow 4\rightarrow 34\rightarrow
134\rightarrow 1234 \rightarrow 123 \rightarrow 23$, ${\mathbb{P}}(\mathcal{S%
}_{32}) = 0.0006 $

\item $\mathcal{S}_{33}$: $\varnothing\rightarrow 4\rightarrow 34\rightarrow
234\rightarrow 23$, ${\mathbb{P}}(\mathcal{S}_{33}) = 0.0205$

\item $\mathcal{S}_{34}$: $\varnothing\rightarrow 4\rightarrow 34\rightarrow
234\rightarrow 1234 \rightarrow 123 \rightarrow 12$, ${\mathbb{P}}(\mathcal{S%
}_{34}) = 0.0006 $

\item $\mathcal{S}_{35}$: $\varnothing\rightarrow 4\rightarrow 34\rightarrow
234\rightarrow 1234 \rightarrow 123 \rightarrow 13$, ${\mathbb{P}}(\mathcal{S%
}_{35}) = 0.0006 $

\item $\mathcal{S}_{36}$: $\varnothing\rightarrow 4\rightarrow 34\rightarrow
234\rightarrow 1234 \rightarrow 123 \rightarrow 23$, ${\mathbb{P}}(\mathcal{S%
}_{36}) = 0.0006 $
\end{itemize}

Let us assume that the countries are farsighted. 
Observe that the game $v$ is superadditive and that the absorbing classes
are singletons (states 12, 13 and 23), as expected from our analysis in
Section~\ref{sec:ieavul}. Then, Proposition~\ref{prop:feas} (Item 1) can be
applied, and we find that all scenarios where country 4 is entering first
(scenarios $\mathcal{S}_{7}$ to $\mathcal{S}_{36}$) are infeasible. Also,
from Proposition~\ref{prop:feas} (Item 2), the 6 remaining scenarios are
feasible.

For these 6 scenarios, the scenario-values are as follows:

\begin{itemize}
\item Scenario $\mathcal{S}_{1}$: $\phi _{1}^{\mathcal{S}_{1}}=v(1)-v(%
\varnothing )=11.569$, $\phi _{2}^{\mathcal{S}_{1}}=v(12)-v(1)=6.489$. The
two countries $3$ and $4$ are inactive, hence $\phi _{3}^{\mathcal{S}%
_{1}}=\phi _{4}^{\mathcal{S}_{1}}=0$. However, they receive their individual
payoff under the formation of coalition $12$, i.e., $\pi _{3}(e^{12})=2.557$
and $\pi _{4}(e^{12})=4.500$, respectively.

\item Scenario $\mathcal{S}_{2}$: $\phi _{1}^{\mathcal{S}_{2}}=v(1)-v(%
\varnothing)=11.569$, $\phi _{3}^{\mathcal{S}_{2}}=v(13)-v(1)=1.989$.
Countries $2$ and $4$ are inactive, and receive $\pi _{2}(e^{13})=7.057$ and 
$\pi _{4}(e^{13})=4.500$, respectively.

\item Scenario $\mathcal{S}_{3}$: $\phi _{1}^{\mathcal{S}%
_{3}}=v(12)-v(2)=11.989$, $\phi _{2}^{\mathcal{S}_{3}}=v(2)-v(\varnothing
)=6.069$. Countries $3$ and $4$ are inactive, and receive $\pi
_{3}(e^{12})=2.557$ and $\pi _{4}(e^{12})=4.500$, respectively.

\item Scenario $\mathcal{S}_{4}$: $\phi _{2}^{\mathcal{S}_{4}}=v(2)-v(%
\varnothing )=6.069$, $\phi _{3}^{\mathcal{S}_{4}}=v(23)-v(2)=1.989$.
Countries $1$ and $4$ are inactive, and receive $\pi _{1}(e^{23})=12.557$
and $\pi _{4}(e^{23})=4.500$, respectively.

\item Scenario $\mathcal{S}_{5}$: $\phi _{1}^{\mathcal{S}%
_{5}}=v(13)-v(3)=11.989$, $\phi _{3}^{\mathcal{S}_{5}}=v(3)-v(\varnothing
)=1.569$. Countries $2$ and $4$ are inactive, and receive $\pi
_{2}(e^{13})=7.057$ and $\pi _{4}(e^{13})=4.500$, respectively.

\item Scenario $\mathcal{S}_{6}$: $\phi _{2}^{\mathcal{S}%
_{6}}=v(23)-v(3)=6.489$, $\phi _{3}^{\mathcal{S}_{6}}=v(3)-v(\varnothing
)=1.569$. Countries $1$ and $4$ are inactive, and receive $\pi
_{1}(e^{23})=12.557$ and $\pi _{4}(e^{23})=4.500$, respectively.
\end{itemize}

To compute the process-value, we must consider that scenarios $7$ to $36$
will not realize, therefore, we have to renormalize the probabilities of
scenarios $1$ to $6$ to obtain 
\begin{equation*}
{\mathbb{P}}(\mathcal{S}_{1})={\mathbb{P}}(\mathcal{S}_{2})=0.301,\ {\mathbb{%
P}}(\mathcal{S}_{3})={\mathbb{P}}(\mathcal{S}_{4})=0.158,\ {\mathbb{P}}(%
\mathcal{S}_{5})={\mathbb{P}}(\mathcal{S}_{6})=0.041,
\end{equation*}%
which yields the following values: 
\begin{equation}
\phi _{1}=9.3468,\ \phi _{2}=4.1351,\ \phi _{3}=1.0407,\ \phi _{4}=0. 
\label{eq:pro}
\end{equation}


\paragraph{Interpretation of the results and discussion:}

If a scenario is observed, say $\mathcal{S}_{1}$, then $\phi ^{\mathcal{S}%
_{1}}$ should be used for the payment. Recall that if a scenario $\mathcal{S}
$ terminates at coalition $S$, the sum of the values $\sum_{i\in N}\phi
_{i}^{\mathcal{S}}=v(S)$. Hence, the payoff is a sharing of $v(S)$, not
among players in $S$, but among \textit{all players that have been active in
the scenario}. In our case, for all scenarios $\mathcal{S}_{1}$ to $\mathcal{%
S}_{6}$, this amounts to the same. Recall that the inactive players receive
their individual payoff under the formation of coalition $S$.

Now suppose that no scenario has been observed, but one sees only the final
result, i.e., the coalitions that are absorbing states, in our case, $12,13$%
, and $23$. Then, it makes sense to take the expected value over all
scenarios leading to that absorbing state. In our example, taking $S=12$ as
absorbing state, only scenarios $\mathcal{S}_{1}$ and $\mathcal{S}_{3}$ lead
to $12$. The probabilities of their occurrence are $0.244$ and $0.128$,
respectively, so after normalization, this yields $0.656$ and $0.344$.
Taking the expected value yields the following payoff: 
\begin{equation*}
\phi _{1}=11.713,\ \phi _{2}=6.344,\text{ }\phi _{3}=\phi _{4}=0.
\end{equation*}%
Countries $3$ and $4$ do not participate to the scenarios and receive $\pi
_{3}(e^{12})$ and $\pi _{4}(e^{12})$, respectively.

Now taking $S=13$, scenarios $\mathcal{S}_{2}$ and $\mathcal{S}_{5}$ are
involved. After normalization, the probabilities of their occurrence are $%
0.881$ and $0.119$, respectively. This yields: 
\begin{equation*}
\phi _{1}=11.619,\ \phi _{3}=1.939,\ \phi _{2}=\phi _{4}=0.
\end{equation*}%
Finally, for $S=23$, scenarios $\mathcal{S}_{4}$ and $\mathcal{S}_{6}$ are
involved, with normalized probabilities $0.795$ and $0.205$, respectively.
This yields: 
\begin{equation*}
\phi _{2}=6.155,\ \phi _{3}=1.903,\ \phi _{1}=\phi _{4}=0.
\end{equation*}

Taking the process-value does not make so much sense, because this would
mean that no final coalition (absorbing state) will be observed, in which
case the value does not represent a sharing of something (no property of
efficiency for the process-value). Moreover, there are countries active in
some scenarios and inactive in other ones, and the obtained payoff (\ref%
{eq:pro}) is not an imputation.

An important question is indeed whether the obtained payoff vector is an
imputation, i.e., $\phi _{i}\geqslant v(\{i\})$ for every country $i$. Let
us recall the values for $v(\{i\})$: 
\begin{equation*}
v(\{1\})=11.569,\ v(\{2\})=6.069,\ v(\{3\})=1.569.
\end{equation*}%
Comparing these values with the above payoffs for \textit{active} players,
we see that in all cases, the obtained payoffs are imputations when
restricted to active players. This is due to the definition of a feasible
scenario, and since we have taken an average over feasible scenarios, the
obtained payoff is an imputation. For the inactive country $4$, its
individual payoff is always equal to $4.500$, which is exactly $v(\{4\})$.

Finally, let us remark that the payoffs obtained by our method radically
differ from the Shapley value, and other similar concepts like the
egalitarian Shapley value, even if we restrict to the subgame $v_{|S}$ where 
$S$ is the final coalition. This is because the Shapley value would consider
any ordering (scenario to reach $S$) as possible and equiprobable. Our
analysis shows that some scenarios may reveal infeasible, and scenarios have
probability of occurrence, reflecting externalities, that is, all the
possible transitions that can occur along the coalition formation process.

\subsubsection{With a cost of leaving ($c>0$)}

\label{sec:c>0} The previous section has shown that, in accordance with the
theoretical results, any scenario ends with a coalition of size 2. We now
show that introducing a cost of leaving permits to escape this well-known
phenomenon and make the grand coalition stable.

Introducing a cost $c>0$ can delete only backward transitions (i.e., from a
coalition to a subcoalition) in Figure~\ref{fig: trans graph}. Considering a
backward transition $S\rightarrow S\setminus\{i\}$, the minimal cost to put
for deleting this transition is simply equal to $c_{\min}=\pi_i(e^{S%
\setminus\{i\}})-\pi_i(e^S)$. Computing this quantity for all backward links
of Figure~\ref{fig: trans graph} yields

\begin{center}
\begin{tabular}{ll}
Transition & $c_{\min}$ \\ \hline
$14\rightarrow 1$ & 0.281 \\ 
$24\rightarrow 2$ & 0.281 \\ 
$34\rightarrow 3$ & 0.281 \\ 
$123\rightarrow 12$ & 0.143 \\ 
$123\rightarrow 13$ & 0.463 \\ 
$123\rightarrow 23$ & 0.143 \\ 
$124\rightarrow 12$ & 0.889 \\ 
$134\rightarrow 13$ & 0.889 \\ 
$234\rightarrow 23$ & 0.889 \\ 
$1234\rightarrow 123$ & 1.424%
\end{tabular}
\end{center}

It follows that if the cost is at least equal to 1.424, no backward arrow
exists any more in the transition graph, i.e., no country has incentive to
leave any coalition. The graph obtained with $c\geqslant 1.424$ is
represented on Figure~\ref{fig:trgr2}. 
\begin{figure}[h!]
\begin{center}
\psset{unit=0.8cm} \pspicture(-0.2,-1)(12,11) \cnodeput(0,5){0}{$\varnothing$%
} \cnodeput(3,8){1}{$1$} \cnodeput(3,6){2}{$2$} \cnodeput(3,4){3}{$3$} %
\cnodeput(3,2){4}{$4$} \cnodeput(6,10){12}{$\mathbf{12}$} \cnodeput(6,8){13}{%
$\mathbf{13}$} \cnodeput(6,6){14}{$14$} \cnodeput(6,4){23}{$\mathbf{23}$} %
\cnodeput(6,2){24}{$24$} \cnodeput(6,0){34}{$34$} \cnodeput(9,8){123}{$123$} %
\cnodeput(9,6){124}{$124$} \cnodeput(9,4){134}{$134$} \cnodeput(9,2){234}{$%
234$} \cnodeput(12,5){1234}{$1234$} \ncarc[arcangle=0]{->}{0}{1} 
\ncarc[arcangle=0]{->}{0}{2} \ncarc[arcangle=0]{->}{0}{3} %
\ncarc[arcangle=0]{->}{0}{4} \ncarc[arcangle=0]{->}{1}{12} %
\ncarc[arcangle=0]{->}{1}{13} \ncarc[arcangle=0]{->}{2}{12} %
\ncarc[arcangle=0]{->}{2}{23} \ncarc[arcangle=0]{->}{3}{13} %
\ncarc[arcangle=0]{->}{3}{23} \ncarc[arcangle=0]{->}{4}{14} %
\ncarc[arcangle=0]{->}{4}{24} \ncarc[arcangle=0]{->}{4}{34} %
\ncarc[arcangle=0]{->}{14}{124} \ncarc[arcangle=0]{->}{14}{134} %
\ncarc[arcangle=0]{->}{24}{124} \ncarc[arcangle=0]{->}{24}{234} %
\ncarc[arcangle=0]{->}{34}{134} \ncarc[arcangle=0]{->}{34}{234} %
\ncarc[arcangle=0]{->}{124}{1234} \ncarc[arcangle=0]{->}{134}{1234} %
\ncarc[arcangle=0]{->}{234}{1234} \nccircle{->}{12}{0.3cm} %
\nccircle{->}{13}{0.3cm} \nccircle{->}{23}{-0.3cm} \nccircle{->}{123}{0.3cm} %
\nccircle{->}{1234}{0.4cm}\endpspicture
\end{center}
\caption{Transition graph with $c\geqslant 1.424$}
\label{fig:trgr2}
\end{figure}
Observe that now 1234 and 123 are new absorbing states, however the latter
is disconnected from the rest of the graph, hence not reachable. As backward
arrows have disappeared, transition probabilities have to be recomputed.
They are given in Table~\ref{tab: trans probc} below. 
\begin{table}[h!]
\caption{Transition probabilities with $c\geqslant 1.424$}
\label{tab: trans probc}
\begin{center}
\begin{tabular}{|c|c||c|c|}
\hline
$p_{\varnothing,1}$ & 0.488 & $p_{4,24}$ & 0.333 \\ 
$p_{\varnothing,2}$ & 0.256 & $p_{4,34}$ & 0.333 \\ 
$p_{\varnothing,3}$ & 0.066 & $p_{14,124}$ & 0.5 \\ 
$p_{\varnothing,4}$ & 0.190 & $p_{14,134}$ & 0.5 \\ 
$p_{1,12}$ & 0.5 & $p_{24,124}$ & 0.5 \\ 
$p_{1,13}$ & 0.5 & $p_{24,234}$ & 0.5 \\ 
$p_{2,12}$ & 0.5 & $p_{34,134}$ & 0.5 \\ 
$p_{2,23}$ & 0.5 & $p_{34,234}$ & 0.5 \\ 
$p_{3,13}$ & 0.5 & $p_{124,1234}$ & 1 \\ 
$p_{3,23}$ & 0.5 & $p_{134,1234}$ & 1 \\ 
$p_{4,14}$ & 0.333 & $p_{234,1234}$ & 1 \\ \hline
\end{tabular}%
\end{center}
\end{table}
Scenarios $\mathcal{S}_1$ to $\mathcal{S}_6$ from the case $c=0$ are
unaffected and still exist with the same probability of occurrence. However,
all scenarios $\mathcal{S}_7$ to $\mathcal{S}_{36}$ are replaced by new ones
which are:

\begin{itemize}
\item $\mathcal{S}^{\prime }_7$: $\varnothing\rightarrow 4\rightarrow
14\rightarrow  124\rightarrow 1234$

\item $\mathcal{S}^{\prime }_8$: $\varnothing\rightarrow 4\rightarrow
14\rightarrow  134\rightarrow 1234$

\item $\mathcal{S}^{\prime }_9$: $\varnothing\rightarrow 4\rightarrow
24\rightarrow  124\rightarrow 1234$

\item $\mathcal{S}^{\prime }_{10}$: $\varnothing\rightarrow 4\rightarrow
24\rightarrow  234\rightarrow 1234$

\item $\mathcal{S}^{\prime }_{11}$: $\varnothing\rightarrow 4\rightarrow
34\rightarrow  134\rightarrow 1234$

\item $\mathcal{S}^{\prime }_{12}$: $\varnothing\rightarrow 4\rightarrow
34\rightarrow  234\rightarrow 1234$
\end{itemize}

They all have the same probability, which is about 0.03167. Table~\ref%
{tab:ns} gives the scenario-values computed for the above new scenarios. 
\begin{table}[htb]
\caption{Scenario-values for $c\geqslant 1.424$}
\label{tab:ns}
\begin{center}
\begin{tabular}{|c|llll|}
\hline
scenario $\mathcal{S}$ & $\phi^\mathcal{S}_1$ & $\phi^\mathcal{S}_2$ & $\phi^%
\mathcal{S}_3$ & $\phi^\mathcal{S}_4$ \\ \hline
$\mathcal{S}^{\prime }_7$ & 11.813 & 7.131 & 4.369 & 4.500 \\ 
$\mathcal{S}^{\prime }_8$ & 11.813 & 8.869 & 2.631 & 4.500 \\ 
$\mathcal{S}^{\prime }_9$ & 12.631 & 6.313 & 4.369 & 4.500 \\ 
$\mathcal{S}^{\prime }_{10}$ & 14.369 & 6.313 & 2.631 & 4.500 \\ 
$\mathcal{S}^{\prime }_{11}$ & 12.631 & 8.869 & 1.813 & 4.500 \\ 
$\mathcal{S}^{\prime }_{12}$ & 14.369 & 7.131 & 1.813 & 4.500 \\ \hline
\end{tabular}
\end{center}
\end{table}

\paragraph{Interpretation of the results and discussion}

As we have seen above, there is no change for scenarios  $\mathcal{S}%
_1,\ldots,\mathcal{S}_6$, therefore the interpretation and discussion done
in  Section~\ref{sec:c=0} about these scenarios are still valid. A first 
observation concerning the new scenarios $\mathcal{S}^{\prime }_7,\ldots,%
\mathcal{S}^{\prime }_{12}$ is that  they are all feasible, for any country,
including country 4. As a consequence,  the expected value of the scenario
values over $\mathcal{S}^{\prime }_7,\ldots,\mathcal{S}^{\prime }_{12}$ is
an  imputation for all countries, i.e., they will get a payoff at least
equal to  their individual payoff:  
\begin{equation*}
\phi_1=12.938, \ \phi_2 = 7.438, \ \phi_3 = 2.938, \ \phi_4 = 4.500.  
\end{equation*}
Note also that these payoffs are at least equal to (and strictly greater
than for countries 1, 2, 3) the payoff they would obtain if the other
scenarios $\mathcal{S}_1,\ldots,\mathcal{S}_6$ realize. This means that
farsighted players should engage in the new scenarios $\mathcal{S}^{\prime
}_7,\ldots,\mathcal{S}^{\prime }_{12}$ rather than the other ones. These
scenarios lead to the formation of the grand coalition, and not to a
coalition of size 2.

We see that introducing a cost of leaving, which may be interpreted as a
damage to reputation, makes all countries better off (and strictly better
for countries 1, 2, 3). This cost must be at least equal to the payoff that
country 4 would benefit from leaving the grand coalition.

\section{Concluding remarks}

\label{sec:conclusions}

In this paper, we apply a Markov process approach to the formation of IEAs,
where any absorbing state of the Markov chain is a stable IEA. Our study
contributes to several streams of research focusing on IEAs: coalition
formation, and both cooperative and noncooperative game approaches. 
In particular, by introducing a cost of leaving a coalition, we show that 
the final coalition may have more than two members and may be the grand
coalition.  
Despite the vast related literature, the study of IEAs remains of high
importance.

There exists great potential for applying network theory to IEAs. \cite%
{Currarini et al.} discuss a wide range of application of network economics
to environmental problems, since they are typically related to local
interactions and network structures. However, contributions applying network
theory to IEAs seem still limited. \cite{GuntherHellmann17} study the
stability of IEAs in a repeated game framework, and characterize necessary
and sufficient conditions for stability of an IEA when pollution has global
and local effects. Local pollution spillovers are represented by a network,
where a link between two countries (e.g., sharing the same border),
indicates whether pollution of the countries affects each other. Clearly,
incorporating the network structure enriches considerably the analysis of
IEAs. In particular, \cite{GuntherHellmann17} show that, while stable IEAs
exist if the network structure is balanced, they might fail to exist for
networks being highly asymmetric with respect to neighbors in the network.

Applying network theory to IEAs is a natural follow-up of the present paper.
In particular, it would be interesting to apply a dynamic network formation
process introduced in \cite{CauGraRus15} to the analysis of IEAs. \cite%
{CauGraRus15} used a similar approach to that of \cite{fagr12,fagr13}, but
applied to network instead of coalition processes. The authors investigated
the dynamic random network formation processes, where links may appear and
disappear at any time, i.e., allowing that the network and the set of active
players may change over time. A scenario of the process, i.e., a sequence of
networks, is the result of a stochastic process, typically a Markov chain.
Following this dynamic network based approach, we could apply the link-based
scenario allocation rule introduced and axiomatically characterized in \cite%
{CauGraRus15}, to allocate value generated by the IEA dynamic network
formation process, i.e., to share benefits among all participants involved
in the IEA.

\section*{Statements and declarations}

\paragraph{Conflict of interest}
The authors have no competing interests to declare that are relevant to the
content of this article, nor any relevant financial or non-financial interests
to disclose. 

\paragraph{Funding}
Michel Grabisch and Agnieszka Rusinowska acknowledge the financial support
by the European Union HORIZON-WIDERA-2023-TALENTS-01 project No 101183743
(AGATE). Georges Zaccour acknowledges the financial support by NSERC,
Canada, grant RGPIN-2021-02462.


\end{document}